\newcommand{\Rmnum}[1]{\expandafter\@slowromancap\romannumeral #1@}
\newtheorem{definition}{Definition}
\newtheorem{lemma}[definition]{Lemma}
\newtheorem{theorem}[definition]{Theorem}
\newtheorem{Corollary}[definition]{Corollary}
\newtheorem{remark}[definition]{Remark}
\def\squareforqed{\hbox{\rlap{$\sqcap$}$\sqcup$}}
\def\qed{\ifmmode\squareforqed\else{\unskip\nobreak\hfil
		\penalty50\hskip1em\null\nobreak\hfil\squareforqed
		\parfillskip=0pt\finalhyphendemerits=0\endgraf}\fi}
\def\endenv{\ifmmode\;\else{\unskip\nobreak\hfil
		\penalty50\hskip1em\null\nobreak\hfil\;
		\parfillskip=0pt\finalhyphendemerits=0\endgraf}\fi}
\newenvironment{proof}{\noindent \textbf{{Proof.~} }}{\qed}
\def\Dbar{\leavevmode\lower.6ex\hbox to 0pt
	{\hskip-.23ex\accent"16\hss}D}
\def\url@leostyle{%
	\@ifundefined{selectfont}{\def\UrlFont{\sf}}{\def\UrlFont{\small\ttfamily}}}
\def\tr{\mathrm{tr}}
\newcommand{\bra}[1]{\langle#1|}
\newcommand{\ket}[1]{|#1\rangle}
\def\Dbar{\leavevmode\lower.6ex\hbox to 0pt
	{\hskip-.23ex\accent"16\hss}D}
\begin{document}
\title{Bounds for Revised Unambiguous Discrimination Tasks of Quantum Resources}
	
	\author{Xian Shi}\email[]
	{shixian01@gmail.com}
	\affiliation{College of Information Science and Technology,
		Beijing University of Chemical Technology, Beijing 100029, China}

	%
	
	
	
	\date{\today}
	
	\pacs{03.65.Ud, 03.67.Mn}
	
	\begin{abstract}
\indent Quantum state discrimination is a fundamental task that is meaningful in quantum information theory. In this manuscript, we consider a revised unambiguous discrimination of quantum resources. First, we present an upper bound of the success probability for a revised unambiguous discrimination task in the unasymptotic and asymptotic scenarios. Next, we generalize the task from quantum states to quantum channels. We present an upper bound of the success probability for the task under the adaptive strategy. Furthermore, we show the bound can be computed efficiently. Finally, compared with the classical unambiguous discrimination, we show the advantage of the quantum by considering a quantifier on a set of semidefinite positive operators.
 	\end{abstract}
 \maketitle
 \tableofcontents
\section{Introduction}
\label{sec:introduction}
\indent Nowadays, quantum information science attracts much attention from the whole world. One of the reasons is that people have seen that, compared with classical information science, quantum information science can process information tasks more quickly and effectively \cite{nielsen,wilde,gyong,yang2023}. Nevertheless, a deeper understanding of the features of quantum mechanics is still essential to explore better methods to cope with information tasks.

Quantum state discrimination is a topic related to several ultimate quantum properties closely \cite{bergou2010discrimination,bae2015quantum}, which has received persistent attention from the researchers of quantum information theory \cite{yuen1975optimum,chefles,bou,bergou2010discrimination,carmeli2022quantum}. A fundamental fact of quantum state discrimination is that non-orthogonal quantum states cannot be distinguished perfectly \cite{nielsen}, which associates with the fundamental \emph{no-cloning} theorem \cite{wootters1982single}. Besides, quantum state discrimination also plays an important role in many areas of quantum information science, such as quantum key distribution \cite{leverrier2009unconditional,namkung2024analysis}, quantum algorithms \cite{wu2008exact,childs2016optimal}, and quantum sensing \cite{zhuang2017optimum,pirandola2018advances}.

The topic of quantum state discrimination begins with Helstrom \cite{helstrom} and Holevo \cite{Holevo} in 1960s and 1970s. Since then, there has been further developed on the topic under different optimal strategies, such as minimum error discrimination \cite{ha2021quantum}, unambiguous ­ discrimination \cite{ivanovic1987differentiate,bergou2012optimal,herzog2012optimal},
and maximal ­confidence  \cite{croke2006maximum}. Besides, there exists other strategies for quantum state discrimination, such as sequential measurement strategies \cite{bergou2013extracting,fields2020sequential}. Recently, much attention has been paid to the topic of quantum channels under the product strategy and adaptive strategy in multi-shot scenarios \cite{harrow2010adaptive,cooney2016strong,yuan2017fidelity,pirandola2019fundamental,zhuang2020ultimate,wilde2020amortized,fang2021towards,salek2021asymptotic,salek2022usefulness,huang2024exact}. However, most of the results on the discrimination of quantum resources in the asymptotic limit are not easily computed. 

Different from quantum state discrimination, quantum exclusion can be seen as ruling out the corresponding state \cite{bandyopadhyay2014conclusive}. Recently, the task of quantum exclusion was studied under the information-theoretic treatment \cite{hsieh2023quantum,mishra2024optimal}. Moreover, Ji $et$ $al.$ in \cite{ji2024barycentric} made a connection between the upper bounds of error exponent of state exclusion and channel exclusion under the adaptive strategies in the unasymptotic and asymptotic scenarios, they also presented a computable bound for the channel exclusionunder the adaptive strategy.

In this paper, we consider a revised game on unambiguous discrimination of quantum states and quantum channels, and we present the following results on the game.
\begin{itemize}
	\item For quantum state discrimination, we present an analytical expression for the success of revised unambiguous discrimination games, we also present an upper bound of the success of revised unambiguous discrimination games in the asymptotic scenario. 
	\item For quantum channel discrimination, we mainly consider the revised unambigous discrimination games under the adaptive strategies. Specifically, we present an upper bound of the quantum channel discrimination games in the asymptotic scenario, we also show that the upper bound can be computed efficiently through a semidefinite programming.
	\item We define a function on a set of positive semidefinite operators, then by considering the relation between the quantum revised unambiguous state discrimination and the classical, we presents an operational interpretation of the function, which also shows the advantage under the quantum strategy.
\end{itemize}

\section{Preliminary Knowledge}

In the section, firstly we present a whole mathematical notations used in the manuscript, then we present the definitions and properties of divergence measures for quantum states and channels.

\subsection{Notations}
\indent Let $\mathcal{H}_A$ be the Hilbert space which is relevant to the quantum system $A$. Let $\boldsymbol{Herm}_A$ and $\boldsymbol{PSD}_A$ be the set of Hermitian operators and positive semidefinite operators acting on $\mathcal{H}_A$. Here $\gamma\ge \varphi$ denotes that $\gamma-\varphi$ is positive semidefinite. Let $supp(\vartheta)=\{\vartheta\ket{\psi}|\psi\in\mathcal{H}_A\}$ denote the support of $\vartheta$. 

A quantum state is a positive semidefinite with trace 1. Let $\boldsymbol{D}_A$ be the set of quantum states acting on $\mathcal{H}_A$, and let $\tilde{\boldsymbol{D}}_A$ be the set of Hermitian operators acting on $\mathcal{H}_A$ with trace 1, which can be seen as the affine hull of $\boldsymbol{D}_A.$

A quantum channel $\Delta_{A\rightarrow B}$ is a completely positive and trace-preserving linear map from $\mathcal{H}_A$ to $\mathcal{H}_B$, let $\boldsymbol{CP}_{A\rightarrow B}$ be the set of quantum channels from $\mathcal{H}_A$ to $\mathcal{H}_B$. Assume $\dim \mathcal{H}_A=\dim\mathcal{H}_B=d$, let $J_{\mathcal{N}}=\frac{1}{d}\sum_{i,j}\ket{i}\bra{j}\otimes \mathcal{N}(\ket{i}\bra{j})$ be the Choi oerator of a quantum channel $\mathcal{N}$.

A positive operator valued measurement (POVM) $\{M_i|i=1,2,\cdots,k\}$ is a set of positive semidefinite operators with $k$ outcomes and $\sum_i M_i=\mathbb{I}.$ When a POVM applies to a state $\rho$, the probability to get the $r$-th outcome is given by 
$P(r|\rho)=\tr M_r\rho.$ Moreover, Each POVM $\{M_i\}_{i=1}^k$ can be regarded as a channel 
\begin{align}
\Lambda(\rho)=\sum_i\tr (M_i \rho)\ket{i}\bra{i},
\end{align}
which transforms a quantum state into a state acting on a classical system. We sumarize the notations used in this manuscript in TABLE \ref{T0}.
\begin{table}
	\centering
	\begin{tabular}{c|c}
		Symbol & Definition \\\hline
		$\mathbb{Z}^{+}$ & $\{1,2,3,\cdots\}$\\
		$\boldsymbol{Herm}_A$ & $\{\delta|\delta=\delta^{\dagger},\sup\limits_{\varphi\in \mathcal{H}_A}\frac{||\delta\ket{\varphi}||}{||\varphi||}< \infty,\delta:\mathcal{H}_A\rightarrow\mathcal{H}_A,\}$\\
		$\boldsymbol{PSD}_A$ & $\{\delta\in \boldsymbol{Herm}_A|\delta\ge 0\}$\\
		$\boldsymbol{D}_A$ & $\{\delta\in \boldsymbol{PSD}_A|\tr\delta=1\}$\\
		$\tilde{\boldsymbol{D}}_A$ & $\{\delta\in \boldsymbol{Herm}_A|\tr\delta=1\}$\\
		$\boldsymbol{CP}_{A\rightarrow B}$ & $\{\mathcal{M}:\boldsymbol{Herm}_A\rightarrow \boldsymbol{Herm}_B|\mathcal{M} \textit{ is completely positive}\}$\\
		$\boldsymbol{C}_{A\rightarrow B}$ & $\{\mathcal{M}\in \boldsymbol{CP}_{A\rightarrow B}|\tr \circ\mathcal{M}=\tr \}$
	\end{tabular}
	\caption{Overview of Notations}
	\label{T0}
\end{table}

\subsection{Divergence measures for states}

Divergence measures aim to quantify the relevance of two states or more states. In the following, we will recall the definitions and properties of several generalized divergences. 

First, we recall the quantum hypothesis testing for two states, $\rho$ and $\sigma$. Assume the system is either prepared in the state $\rho$ or $\sigma$, the goal of the player is to guess which the prepared state is with the use of a two-outcome POVM $\{(E_0,E_1)|E_0+E_1=\mathbb{I},E_0,E_1\ge 0\}$, here $E_0$ and $E_1$ correspond to the states $\rho$ and $\sigma$, respectively. There are two types of mistakes, 
\begin{itemize}
	\item \emph{Type I Error:} the outcome is 1 while the state is $\rho$.
	\item \emph{Type II Error:} the outcome is 0, while the state is $\sigma$.
\end{itemize}

The goal of quantum hypothesis testing is to minimize the probability of \emph{Type II Error} under the condition that the probability of \emph{Type I Error} is bounded. More precisely, let $E_0=Q$ and $E_1=\mathbb{I}-Q$, and let $\epsilon\in(0,1)$, the optimised \emph{Type II Error} probability is defined as
\begin{align*}
\beta_{\epsilon}(\rho||\sigma)=&\min_Q \tr(Q\sigma)\\
\textit{s. t.}\hspace{3mm}&\tr(\mathbb{I}-Q)\rho\le \epsilon,\\
&0\le Q\le \mathbb{I}.
\end{align*}
Next let
\begin{align*}
\gamma_{\epsilon}(\rho||\sigma)\equiv&1-\beta_{\epsilon}(\rho||\sigma),
\end{align*}
when $\tr\sigma=1$, $\gamma_{\epsilon}(\rho||\sigma)$ can be written as 
\begin{align}
\gamma_{\epsilon}(\rho||\sigma)=&\sup_Q\tr Q\sigma\\
\textit{s. t.}\hspace{3mm}&\tr Q\rho\le \epsilon,\\
&0\le Q\le \mathbb{I}.
\end{align}
Next we introduce the following generalized hypothesis testing divergence $\tilde{D}_H^{\epsilon}(\rho||\sigma)$,
\begin{align*}
\tilde{D}_H^{\epsilon}(\rho||\sigma)=&-\ln(1-\gamma_{\epsilon}(\rho||\sigma))\\
=&\sup_Q\{-\ln(1-\tr Q\sigma)|\tr Q\rho\le \epsilon,0\le Q\le \mathbb{I}\}.
\end{align*}

Next we recall the definition of the extended sandwiched Renyi divergence, furthermore, we also present the relation between the sandwiched Renyi divergence and the hypothesis testing.

Assume $\beta$ is a Hermitian operator of the system $\mathcal{H}_A$, and $\sigma$ is a positive semidefinite operator. Let $\alpha\in[0,\infty),$ the extended sandwiched Renyi divergence is defined as \cite{muller2013quantum,wang2020alpha}
\begin{equation}
\tilde{D}_{\alpha}(\beta||\sigma)\equiv
\begin{cases}
\frac{1}{\alpha-1}\log||\sigma^{\frac{1-\alpha}{2\alpha}}\beta\sigma^{\frac{1-\alpha}{2\alpha}}||_{\alpha}^{\alpha} \hspace{5mm} \textit{if $supp(\beta) \subseteq supp(\sigma)$,}\\
+\infty \hspace{5mm} \textit{otherwise},
\end{cases}
\end{equation}
where $||\gamma||_{\alpha}=[\tr(\gamma^{\dagger}\gamma)^{\frac{\alpha}{2}}]^{\frac{1}{\alpha}}$.

\begin{lemma}\label{l3}
	Let $\rho$ be a Hermitian operator of the system $\mathcal{H}_A$ with $\tr\rho=1$, and let $\sigma$ be a state of system $\mathcal{H}_A$. Then 
	\begin{align}
	\tilde{D}_H^{\epsilon}(\rho||\sigma)\le	\tilde{D}_{\alpha}(\rho||\sigma)-\frac{\alpha}{\alpha-1}\ln(1-\epsilon).
	\end{align}
	Here $\epsilon\in(0,1)$.
\end{lemma}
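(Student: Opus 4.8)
The plan is to reduce the statement to a single pointwise inequality over the feasible set defining $\tilde{D}_H^{\epsilon}$, and then take the supremum. Fix any $Q$ with $0\le Q\le\mathbb{I}$ and $\tr(Q\rho)\le\epsilon$, and put $P=\mathbb{I}-Q$; then $0\le P\le\mathbb{I}$, we have $1-\tr(Q\sigma)=\tr(P\sigma)$, and, since $\tr\rho=1$, the constraint becomes $\tr(P\rho)\ge 1-\epsilon$. When $supp(\rho)\not\subseteq supp(\sigma)$ the right-hand side equals $+\infty$ and there is nothing to prove, so I may assume $supp(\rho)\subseteq supp(\sigma)$ and work on $supp(\sigma)$, where the negative powers of $\sigma$ are defined; I also take $\alpha\in(1,\infty)$, the regime in which $\tfrac{\alpha}{\alpha-1}>0$. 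It then suffices to establish the pointwise bound $-\ln\tr(P\sigma)\le\tilde{D}_{\alpha}(\rho\|\sigma)-\tfrac{\alpha}{\alpha-1}\ln(1-\epsilon)$.

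The heart of the argument is a Hölder estimate. Writing $\tilde\rho=\sigma^{\frac{1-\alpha}{2\alpha}}\rho\,\sigma^{\frac{1-\alpha}{2\alpha}}$ and $s=\frac{\alpha-1}{2\alpha}$, so that $\rho=\sigma^{s}\tilde\rho\,\sigma^{s}$, cyclicity of the trace gives $\tr(P\rho)=\tr\!\big[(\sigma^{s}P\sigma^{s})\,\tilde\rho\big]$. Applying Hölder's inequality for Schatten norms with the conjugate exponents $p=\frac{\alpha}{\alpha-1}$ and $\alpha$ yields
\begin{align}
\tr(P\rho)\le\big\|\sigma^{s}P\sigma^{s}\big\|_{p}\,\big\|\tilde\rho\big\|_{\alpha},
\end{align}
and by definition $\|\tilde\rho\|_{\alpha}^{\alpha}=\exp\!\big((\alpha-1)\tilde{D}_{\alpha}(\rho\|\sigma)\big)$. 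The remaining task is to control $\|\sigma^{s}P\sigma^{s}\|_{p}$ by a power of $\tr(P\sigma)$. Here I would invoke the Araki--Lieb--Thirring inequality with exponent $r=p\ge 1$ to get $\tr\!\big[(\sigma^{s}P\sigma^{s})^{p}\big]\le\tr\!\big[P^{p}\sigma^{2sp}\big]$, observe that the exponents are tuned so that $2sp=1$, and finally use $P^{p}\le P$ (valid because $0\le P\le\mathbb{I}$ and $p\ge1$) together with $\sigma\ge0$ to conclude $\|\sigma^{s}P\sigma^{s}\|_{p}^{p}\le\tr(P\sigma)$.

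Combining the two estimates gives $\tr(P\rho)\le\tr(P\sigma)^{1/p}\exp\!\big(\tfrac{\alpha-1}{\alpha}\tilde{D}_{\alpha}\big)$; raising to the power $\tfrac{\alpha}{\alpha-1}>0$ and inserting $\tr(P\rho)\ge 1-\epsilon$ produces $\tr(P\sigma)\ge(1-\epsilon)^{\alpha/(\alpha-1)}\exp(-\tilde{D}_{\alpha})$, which is exactly the pointwise bound after taking $-\ln$. Taking the supremum over feasible $Q$ then yields the claimed inequality for $\tilde{D}_H^{\epsilon}$.

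The step I expect to be the main obstacle is making sure every inequality remains valid in the \emph{extended} setting, where $\rho$ is only Hermitian rather than a genuine state. This is what forces me to pass through absolute values, $\tr(P\rho)\le|\tr[(\sigma^{s}P\sigma^{s})\tilde\rho]|$, after which Hölder and the definition of $\|\cdot\|_{\alpha}$ apply to arbitrary operators, while Araki--Lieb--Thirring and the monotonicity $P^{p}\le P$ are used only on the positive semidefinite operators $P$ and $\sigma$, so the possible non-positivity of $\rho$ never enters those steps. An alternative route would apply the data-processing inequality for the extended sandwiched divergence to the binary measurement channel $X\mapsto\tr(PX)\proj{0}+\tr((\mathbb{I}-P)X)\proj{1}$ and then discard the nonnegative second term of the resulting two-point divergence; I prefer the direct Hölder/Araki--Lieb--Thirring argument because it keeps the proof self-contained and avoids having to cite data processing for a non-positive first argument. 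The delicate point in either route is tuning the exponents so that precisely the factor $(1-\epsilon)^{\alpha/(\alpha-1)}$, and not a weaker constant, appears.
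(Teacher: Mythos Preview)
Your argument is correct, and it takes a genuinely different route from the paper. The paper proves the lemma exactly via the alternative you mention and set aside: it applies the data-processing inequality for the extended sandwiched R\'enyi divergence to the binary measurement channel $\mathcal{N}(X)=\tr(QX)\proj{1}+\tr((\mathbb{I}-Q)X)\proj{2}$, writes out the resulting two-point quantity, drops the nonnegative term coming from outcome $1$, and then uses $\tr((\mathbb{I}-Q)\rho)\ge 1-\epsilon$. Your H\"older/Araki--Lieb--Thirring computation effectively reproves, by hand and for this special channel, the content of that monotonicity step: the identity $2sp=1$ is precisely what makes the classical two-point divergence collapse to $\tr(P\sigma)$, and $P^{p}\le P$ is what absorbs the measurement coarse-graining. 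What your approach buys is self-containment---you never invoke data processing for a Hermitian (possibly non-positive) first argument, which is the one subtle citation in the paper's proof---at the price of a slightly longer calculation; what the paper's approach buys is brevity and a clearer conceptual picture (the bound is ``monotonicity plus throw away one term''). Both routes arrive at the sharp constant $\tfrac{\alpha}{\alpha-1}\ln(1-\epsilon)$ for the same reason, namely that after the reduction only the single summand $\tr(P\rho)^{\alpha}\tr(P\sigma)^{1-\alpha}$ is retained.
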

Here the proof is placed in the section \ref{app1}

Another generalized divergence is the geometric Renyi divergence \cite{fang2021geometric}, which is defined as

\begin{equation}
\widehat{D}_{\alpha}(\rho||\sigma)\equiv
\begin{cases}
\frac{1}{\alpha-1}\ln\tr[\sigma(\sigma^{-\frac{1}{2}}\rho\sigma^{-\frac{1}{2}})^{\alpha}]\hspace{5mm} \textit{if $supp(\rho) \subseteq supp(\sigma)$,}\\
+\infty \hspace{5mm} \textit{otherwise},
\end{cases}
\end{equation}
here $\alpha\in (1,2]$. Then we recall the properties of $\widehat{D}_{\alpha}(\rho||\sigma)$. When $\alpha\rightarrow 1$, the geometric Renyi divergence turns into the Belavkin-Staszewski relative entropy
\begin{align}
\lim\limits_{\alpha\rightarrow1}\widehat{D}_{\alpha}(\rho||\sigma)=\overline{D}(\rho||\sigma)=\tr\rho\log(\rho^{\frac{1}{2}}\sigma^{-1}\rho^{\frac{1}{2}}).\label{bs}
\end{align}

At last, the above definitions of divergence measures for states can be generalized for quantum channels. Assume $\mathcal{N}\in \boldsymbol{C}_{A\rightarrow B}$ is a channel, and $\mathcal{M}\in \boldsymbol{CP}_{A\rightarrow B}$ is completely positive. When $\alpha\in(1,2]$, the geometric Renyi channel divergence \cite{fang2021geometric} is defined as
\begin{align}
\widehat{D}_{\alpha}(\mathcal{N}||\mathcal{M})=&\sup_{\rho\in \boldsymbol{D}_{RA}}\widehat{D}_{\alpha}(\mathcal{N}(\rho_{RA})||\mathcal{M}(\rho_{RA}))\\
=&\begin{cases}
\frac{1}{\alpha-1}\ln||\tr_B[\mathcal{J}_M^{\frac{1}{2}}(\mathcal{J}^{-\frac{1}{2}}_{\mathcal{M}}\mathcal{J}_N)^{\alpha}\mathcal{J}^{\frac{1}{2}}_{\mathcal{M}}]||_{\infty}\hspace{5mm} \textit{if $supp(\mathcal{J}_{\mathcal{N}}) \subseteq supp(\mathcal{J}_{\mathcal{M}})$,}\\
+\infty \hspace{5mm} \textit{otherwise},
\end{cases}
\end{align}

The properties of the divergence measures for states and channels are summarized in the section \ref{app1}.

Based on some divergence $\mathbf{D}(\cdot||\cdot)$, we recall the $\mathbf{D}$-radius to quantify the dissimilarity of multiple states \cite{mosonyi2021,mishra2024optimal}. Here we mainly consider the $\mathbf{D}$-radius $R^{\mathbf{D}}(\cdot)$ for a set of semidefinite positive operators $\{\rho_x\}_x$, which is defined in \cite{mishra2024optimal},  
\begin{align}
R^{\mathbf{D}}(\{\rho_x\}_x)=\sup_{\{p_x\}}\inf_{\tau\in \boldsymbol{D}(\mathcal{H})}\sum_{x}p_x\mathbf{D}(\tau||\rho_x),
\end{align}
where the infimum takes over all the probability distribution $p_i>0,\sum_i p_i=1$, and the supremum takes over all the states acting on $\mathcal{H}$.  When $\mathbf{D}(\cdot||\cdot)$ is convex and lower-semicontinuous, it can be written as
\begin{align}
R^{\mathbf{D}}(\{\rho_x\}_x)=\inf_{\tau\in \boldsymbol{D}(\mathcal{H})}\max_x\mathbf{D}(\tau||\rho_x),\label{rd}
\end{align}
where the infimum takes over all the states acting on $\mathcal{H}$, and the maximum takes over all the elements in $\{\rho_x\}_x.$
\section{Main Results}
\indent In this section, we will first consider the task of quantum state discrimination. Specifically, we present an analytical formula for the one-shot success probability, we also present an upper bound on the asymptotic success probability. Furthermore, the upper bound can be computed efficiently. Next we generalize the task of quantum state discrimination to the task of quantum channel discrimination. The method to derive the bounds is generalized from quantum states discrimination. We also present a computable bound through a semidefinite programming. At last, we define a quantity on a set of postive semidefinite operators, and present an operational interpretation of the quantity.

\subsection{Quantum state discrimination}
\indent Quantum state discrimination is significant for various quantum information processing tasks.  It can be regarded as a game played by two parties, Alice and Bob. During the game, Alice prepares a state $\rho_i$ in the set $\{\rho_x\}_{x=1}^{k}$ with its prior probability $p_i>0$, respectively, here $p_i>0$ and $\sum_ip_i=1$, $i=1,2\cdots,k$. The aim of Bob is to guess the state Alice sent by applying appropriate POVM. 

In this manuscript, we consider the following revised quantum state unambiguous discrimination. Here Bob is allowed to apply a POVM with $k+1$ elements, $\{M_i|i=0,1,2,\cdots,k\}$. When the outcome is $0,$ Bob will waive the right to guess the state Alice sent. Compared with the formal unambiguously discrimination task, the task here allows the player to make errors with a certain probability. In this task, the aim is to maximize the following success probability,
\begin{align}
P_{succ}^q(\mathcal{E},\boldsymbol{p},\eta)=&\max\sum_{x=1}^k p_x\mathrm{Tr}(\rho_xQ_x),\label{ud}\\
\textit{s. t.}\hspace{3mm}&  \sum_{x=1}^k Q_x\le (1-\eta)\mathbb{I},\nonumber\\
&Q_x\ge 0,\hspace{3mm} x=1,2,\cdots,k,0,\nonumber
\end{align}
here $\mathcal{E}=\{\rho_x\}$ is a set of quantum states, $p_x$ is the prior probability that Alice sends $\rho_x$, respectively. Evidently, when $\eta=0,$ this task reduces to a general quantum state discrimination.
In the following, we will denote the above game as $(\mathcal{E},\boldsymbol{p},\eta).$ Next we present the other equivalent expression of $P_{succ}^q(\mathcal{E},\boldsymbol{p},\eta)$.

\begin{theorem}\label{t1}
	Assume $\mathcal{E}=\{\rho_x\}_{x=1}^k$ is a set with $k$ quantum states, Alice prepares a state $\rho_x$ in $\mathcal{E}$ with its prior probability $p_x$, the optimal success probabillity for  $(\mathcal{E}=\{\rho_x\},\boldsymbol{p}=\{p_x\},\eta)$ is
	\begin{align*}
	P_{succ}^q(\mathcal{E},\boldsymbol{p},\eta).=&\min_{\tau\in \tilde{\boldsymbol{D}}_A}\gamma_{\epsilon}(\pi_X\otimes\tau_A||\rho_{XA})\nonumber\\
	=&\min_{\tau\in\tilde{\boldsymbol{D}}_A}[1-\exp(-\tilde{D}_{H}^{\epsilon}(\pi_X\otimes\tau_A||\rho_{XA}))].
	\end{align*}
	Here $\epsilon=\frac{1-\eta}{k}$ and $\pi_X=\frac{1}{k}\sum_{x=1}^k\ket{x}\bra{x}$.
\end{theorem}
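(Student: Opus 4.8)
The plan is to evaluate both sides by semidefinite programming duality and then match the two dual programs. First I would strip off the classical register $X$. Writing $\rho_{XA}=\sum_x p_x\ketbra{x}{x}\otimes\rho_x$ for the classical--quantum state encoding the ensemble, both $\rho_{XA}$ and $\pi_X\otimes\tau_A$ are block diagonal in the $X$-basis. Hence applying the dephasing channel $\mathcal{P}(Q)=\sum_x(\ketbra{x}{x}\otimes\mathbb{I})\,Q\,(\ketbra{x}{x}\otimes\mathbb{I})$ to any feasible $Q$ in the program defining $\gamma_{\epsilon}(\pi_X\otimes\tau_A||\rho_{XA})$ leaves both the objective $\tr(Q\rho_{XA})$ and the constraint $\tr(Q(\pi_X\otimes\tau_A))\le\epsilon$ unchanged while preserving $0\le Q\le\mathbb{I}$, so I may restrict to $Q=\sum_x\ketbra{x}{x}\otimes Q_x$ with $0\le Q_x\le\mathbb{I}$. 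Using $\epsilon=(1-\eta)/k$ and $\tr\pi_X=\tr\tau=1$ turns the constraint into $\sum_x\tr(Q_x\tau)\le 1-\eta$ and the objective into $\sum_x p_x\tr(Q_x\rho_x)$, so that $\gamma_{\epsilon}(\pi_X\otimes\tau_A||\rho_{XA})=\sup\{\sum_x p_x\tr(\rho_x Q_x):0\le Q_x\le\mathbb{I},\ \sum_x\tr(Q_x\tau)\le 1-\eta\}$, whose objective already coincides with that of $P_{succ}^q$.

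Next I would dualize. For fixed $\tau$, introducing a scalar multiplier $\mu\ge0$ for the trace constraint and operator multipliers for $Q_x\le\mathbb{I}$, and using $\sup_{0\le R\le\mathbb{I}}\tr(RA)=\tr(A_+)$, strong duality (Slater holds, e.g. $Q_x=c\mathbb{I}$ with $kc<1-\eta$) gives $\gamma_{\epsilon}(\pi_X\otimes\tau_A||\rho_{XA})=\min_{\mu\ge0}[\mu(1-\eta)+\sum_x\tr((p_x\rho_x-\mu\tau)_+)]$, where $A_+$ denotes the positive part of a Hermitian operator $A$. Taking the outer minimum over $\tau\in\tilde{\boldsymbol{D}}_A$ and substituting $Y=\mu\tau$ --- as $\mu\ge0$ and $\tr\tau=1$, the operator $Y$ ranges (up to closure) over all Hermitian $Y$ with $\tr Y\ge0$ and $\mu=\tr Y$ --- collapses the two nested minimizations into $\min_\tau\gamma_{\epsilon}=\inf_{Y:\tr Y\ge0}h(Y)$ with $h(Y):=(1-\eta)\tr Y+\sum_x\tr((p_x\rho_x-Y)_+)$. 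Independently, dualizing the defining program \eqref{ud} (the constraint $\sum_x Q_x\le(1-\eta)\mathbb{I}$ already forces $Q_x\le\mathbb{I}$) yields $P_{succ}^q=\min\{(1-\eta)\tr Z:Z\ge p_x\rho_x\ \forall x\}$.

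It then remains to show $\inf_{Y}h(Y)=\min_Z(1-\eta)\tr Z$. For ``$\le$'', feeding the optimal $Z$ into $h$ with $Y=Z$ (note $Z\ge0$) kills every positive part, giving $h(Z)=(1-\eta)\tr Z$. The substantive direction is ``$\ge$'': given any admissible $Y$, I would set $Z:=Y+\sum_x(p_x\rho_x-Y)_+$. Since $(p_x\rho_x-Y)_+\ge p_x\rho_x-Y$ while the remaining summands are positive semidefinite, $Z\ge p_x\rho_x$ for every $x$, so $Z$ is dual-feasible for $P_{succ}^q$; and because $1-\eta\le1$ one gets $(1-\eta)\tr Z=(1-\eta)\tr Y+(1-\eta)\sum_x\tr((p_x\rho_x-Y)_+)\le h(Y)$, whence $P_{succ}^q\le h(Y)$ for every admissible $Y$. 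Combining the two bounds closes the equality; choosing $Y=Z^\star$ for the optimal $Z^\star$ (a positive operator with $\tr Z^\star>0$) exhibits the minimizing $\tau=Z^\star/\tr Z^\star$ as a genuine state, so the outer minimum is attained. The second displayed identity is then immediate, since by definition $\tilde{D}_H^{\epsilon}(\cdot||\cdot)=-\ln(1-\gamma_{\epsilon}(\cdot||\cdot))$ gives $\gamma_{\epsilon}=1-\exp(-\tilde{D}_H^{\epsilon})$ term by term inside the minimization.

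I expect the main obstacle to be the direction $\min_\tau\gamma_{\epsilon}\ge P_{succ}^q$. The minimization runs over all trace-one \emph{Hermitian} (not merely positive) operators $\tau$, so the naive estimate $\sum_x\tr(Q_x\tau)\le(1-\eta)\tr\tau$ breaks down, and the two SDP duals take superficially different shapes --- a single operator inequality $Z\ge p_x\rho_x$ versus an infimum of positive-part traces. Reconciling them hinges precisely on the single-operator construction $Z=Y+\sum_x(p_x\rho_x-Y)_+$ together with the elementary bound $1-\eta\le1$, and on justifying the $Y=\mu\tau$ reparametrization over the non-compact affine set $\tilde{\boldsymbol{D}}_A$.
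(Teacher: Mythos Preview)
Your argument is correct and complete. Both you and the paper ultimately match two SDP duals, but the execution differs.

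The paper proves the two inequalities separately. For $P_{succ}^q\le\min_\tau\gamma_\epsilon$ it argues that any feasible $\{Q_x\}$ yields a block-diagonal $\Lambda_{XA}$ feasible for the hypothesis-testing program (though the displayed claim $\tr\Lambda_{XA}(\pi_X\otimes\tau)\le\epsilon$ for \emph{all} $\tau\in\tilde{\boldsymbol{D}}_A$ is delicate when $\tau$ is not positive). For $P_{succ}^q\ge\min_\tau\gamma_\epsilon$ it starts from the dual $\min\{(1-\eta)\tr Z:Z\ge p_x\rho_x\}$, rewrites the operator constraint on the $XA$ system, introduces a Lagrange multiplier $\Delta$ for it, swaps a min and a max, and finally adds the extra restriction $\Delta\le\mathbb{I}$ to land on $\gamma_\epsilon$.

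You instead dualize $\gamma_\epsilon$ for each fixed $\tau$ into the closed form $\mu(1-\eta)+\sum_x\tr(p_x\rho_x-\mu\tau)_+$, merge the two minimizations via the reparametrization $Y=\mu\tau$, and then bridge to the dual of $P_{succ}^q$ through the explicit feasible point $Z=Y+\sum_x(p_x\rho_x-Y)_+$ combined with the scalar bound $1-\eta\le1$. This route is more constructive: the positive-part calculus never needs $\tau\ge0$, so the affine-hull minimization poses no difficulty; the role of the waiver parameter $\eta$ is isolated in a single inequality; and as a by-product you exhibit the optimal $\tau=Z^\star/\tr Z^\star$ as a genuine state, showing that the minimum over $\tilde{\boldsymbol{D}}_A$ is in fact attained in $\boldsymbol{D}_A$. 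The paper's route stays purely at the level of Lagrangian and minimax manipulations and avoids positive-part functions entirely.
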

\begin{proof}
	Here we denote 
	\begin{align}
	\Lambda_{XA}=\sum_{x}\ket{x}\bra{x}\otimes Q_x,
	\end{align}
	here $\{Q_x|x=1,2,\cdots,k,\emptyset\}$ is a POVM, then
	\begin{align*}
	\Lambda_{XA}\le& \mathbb{I},\\
	\tr\Lambda_{XA}\rho_{XA}=& \sum_{x=1}^k p_x\tr(Q_x\rho_x),\\
	\tr(\Lambda_{XA}(\pi_X\otimes\tau_A))=& \frac{1}{k}\sum_{x=1}^k\tr(Q_x\tau)\\
	\le& \frac{1-\eta}{k},\forall \tau\in \tilde{\boldsymbol{D}}_A,
	\end{align*}
	
	\begin{align}
	P_{succ}^q(\mathcal{E},\boldsymbol{p},\eta)=&\max\limits_{\sum_x Q_x\le (1-\eta)\mathbb{I},Q_x\ge0}\sum_x p_x\tr(\rho_xQ_x)\nonumber\\
	=&\max\limits_{\sum_x Q_x\le (1-\eta)\mathbb{I},Q_x\ge0} \tr\Lambda_{XA}\rho_{XA}\nonumber\\
	\le&\max\limits_{\Lambda\in \boldsymbol{PSD}_{XA}}\min_{\tau_A\in \tilde{\boldsymbol{D}}_A}[\tr(\Lambda_{XA}\rho_{XA})|\Lambda_{XA}\le \mathbb{I}_{XA},\tr[\Lambda_{XA}(\pi_X\otimes\tau_A)]\le \frac{1-\eta}{k}]\nonumber\\
	\le &\min_{\tau_A\in\tilde{\boldsymbol{D}}_A}\gamma_{\frac{1-\eta}{k}(\pi_X\otimes\tau_A||\rho_{XA})},\label{l1}
	\end{align}
	
	Then we prove the other inequality. 
	\begin{align}
	P_{succ}^q(\mathcal{E},\boldsymbol{p},\eta)=&\min_{\tau} \{(1-\eta)\tr \tau|\tau\ge p_x\rho_x,x=1,2,\cdots,k\}\nonumber\\
	=&\min_{\tau}\{(1-\eta)\tr \tau|\mathbb{I}\otimes \tau\ge \sum_xp_x\ket{x}\bra{x}\otimes\rho_x,\tau\in \boldsymbol{Herm}_A\}\nonumber\\
	=&\min_{\tau,\mu}\{(1-\eta)\mu|\mu \mathbb{I}\otimes \tau\ge \sum_xp_x\ket{x}\bra{x}\otimes\rho_x,\tau\ge 0,\tau\in \tilde{\boldsymbol{D}}_A\}\nonumber\\
	=& \min_{\tau,\mu}\max_{\Delta}\{(1-\eta)\mu+\tr\Delta(\sum_xp_x\ket{x}\bra{x}\otimes\rho_x-\mu\mathbb{I}\otimes \tau),\tau\in \tilde{\boldsymbol{D}}_A\}\nonumber\\
	\ge&\min_{\tau\in\tilde{\boldsymbol{D}}_A}\max_{\Delta}\min_{\mu}\{\tr(\sum_x\ket{x}\bra{x}\otimes\rho_x)\Delta+\mu(1-\eta-\tr(\mathbb{I}\otimes \tau)\Delta)\}\nonumber\\
	=&\min_{\tau\in \tilde{\boldsymbol{D}}_A}\max_{\Delta}\{\tr(\sum_x\ket{x}\bra{x}\otimes\rho_x)\Delta|\tr(\mathbb{I}\otimes \tau)\Delta\le 1-\eta\}\nonumber\\
	\ge&\min_{\tau\in \tilde{\boldsymbol{D}}_A}\max_{\Delta}\{\tr(\sum_x\ket{x}\bra{x}\otimes\rho_x)\Delta|\tr(\pi_X\otimes \tau)\Delta\le\frac{1-\eta}{k},\Delta_{XA}\le \mathbb{I}_{XA}\}\nonumber\\
	=&\min_{\tau\in \tilde{\boldsymbol{D}}_A}\gamma_{\epsilon}(\pi_X\otimes\tau_A||\rho_{XA}).\label{r1}
	\end{align}
	In the last equality, we denote $\epsilon=\frac{1-\eta}{k}$. Combing (\ref{l1}) and (\ref{r1}), we finish the proof of the first equality. Due to the monotonicity of the function $-\ln(1-x)$ for $x$, we finish the proof of the second equality.
\end{proof}

\begin{theorem}\label{t2}
	Assume $\mathcal{E}=\{\rho_x\}_{x=1}^k$ is a set with $k$ quantum states, Alice prepares a state $\rho_x$ in $\mathcal{E}$ with its prior probability $p_x$, the optimal success probabillity for  $(\mathcal{E}=\{\rho_x\},\boldsymbol{p}=\{p_x\},\eta)$ satisfies the following property, 
	\begin{align}
	-\ln(1-p_{succ}(\mathcal{E},\boldsymbol{p},\eta))\le  R^{\tilde{D}_{\alpha}}(\{\rho_x\}_x)-\frac{\alpha}{\alpha-1}(\ln(p_{min})+\ln(k-1+\eta)),
	\end{align}
	here $p_{min}$ is the minimum of all the elements in $\{p_i\}_{i=1}^k$ and $\alpha\in (1,+\infty).$
\end{theorem}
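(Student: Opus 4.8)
The plan is to invert the closed form of Theorem~\ref{t1}, push the resulting quantity through the estimate of Lemma~\ref{l3}, and then reduce the composite sandwiched divergence to the per-state divergences that assemble the $\tilde{D}_{\alpha}$-radius. First I would rewrite the left-hand side as a single optimized divergence. From the second line of Theorem~\ref{t1} with $\epsilon=\frac{1-\eta}{k}$ one has $P_{succ}^q(\mathcal{E},\boldsymbol{p},\eta)=\min_{\tau\in\tilde{\boldsymbol{D}}_A}[1-\exp(-\tilde{D}_H^{\epsilon}(\pi_X\otimes\tau_A||\rho_{XA}))]$, so taking $-\ln(1-\cdot)$ and using monotonicity gives
\begin{align*}
-\ln(1-P_{succ}^q(\mathcal{E},\boldsymbol{p},\eta))=\min_{\tau\in\tilde{\boldsymbol{D}}_A}\tilde{D}_H^{\epsilon}(\pi_X\otimes\tau_A||\rho_{XA}).
\end{align*}
Since $\pi_X\otimes\tau_A$ is Hermitian with unit trace and $\rho_{XA}=\sum_x p_x\ket{x}\bra{x}\otimes\rho_x$ is a state, Lemma~\ref{l3} applies for each $\tau$; as its additive term $-\frac{\alpha}{\alpha-1}\ln(1-\epsilon)$ does not depend on $\tau$, I can carry the minimization through to obtain
\begin{align*}
-\ln(1-P_{succ}^q(\mathcal{E},\boldsymbol{p},\eta))\le\min_{\tau\in\tilde{\boldsymbol{D}}_A}\tilde{D}_{\alpha}(\pi_X\otimes\tau_A||\rho_{XA})-\frac{\alpha}{\alpha-1}\ln(1-\epsilon),
\end{align*}
where $1-\epsilon=\frac{k-1+\eta}{k}$.

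The core step is to bound $\min_{\tau}\tilde{D}_{\alpha}(\pi_X\otimes\tau_A||\rho_{XA})$ by $R^{\tilde{D}_{\alpha}}(\{\rho_x\}_x)$. I would exploit that both arguments are block diagonal in the classical register $X$, with $x$-block $\frac{1}{k}\tau$ and $p_x\rho_x$ respectively; block-additivity of the Schatten norm then gives
\begin{align*}
\tilde{D}_{\alpha}(\pi_X\otimes\tau_A||\rho_{XA})=\frac{1}{\alpha-1}\ln\sum_{x=1}^{k}\exp\left((\alpha-1)\tilde{D}_{\alpha}(\tfrac{1}{k}\tau||p_x\rho_x)\right).
\end{align*}
The scaling identity $\tilde{D}_{\alpha}(c\beta||d\sigma)=\tilde{D}_{\alpha}(\beta||\sigma)+\frac{\alpha}{\alpha-1}\ln c-\ln d$ for $c,d>0$, which I would verify directly from the definition of $\tilde{D}_{\alpha}$, rewrites each block as $\tilde{D}_{\alpha}(\tau||\rho_x)-\frac{\alpha}{\alpha-1}\ln k-\ln p_x$. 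Bounding the $k$-term sum by $k$ times its largest summand, using $p_x\ge p_{min}$, and then restricting the minimization from $\tilde{\boldsymbol{D}}_A$ to genuine states (which can only increase the value and allows me to invoke the radius identity \eqref{rd}, relying on the convexity and lower semicontinuity of $\tilde{D}_{\alpha}$ recorded in the appendix) yields
\begin{align*}
\min_{\tau\in\tilde{\boldsymbol{D}}_A}\tilde{D}_{\alpha}(\pi_X\otimes\tau_A||\rho_{XA})\le R^{\tilde{D}_{\alpha}}(\{\rho_x\}_x)-\ln p_{min}-\ln k.
\end{align*}

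Finally I would collect constants. Substituting this and $\ln(1-\epsilon)=\ln(k-1+\eta)-\ln k$ into the first-step bound gives an inequality with $-\ln p_{min}-\frac{\alpha}{\alpha-1}\ln(k-1+\eta)+\frac{1}{\alpha-1}\ln k$ on the right. Since the $p_x$ sum to one over $k$ values, $p_{min}\le 1/k$, so $\ln(kp_{min})\le 0$; adding the nonnegative quantity $-\frac{1}{\alpha-1}\ln(kp_{min})$ weakens the coefficients of $\ln p_{min}$ and $\ln k$ from $1$ to $\frac{\alpha}{\alpha-1}$ and absorbs the leftover $\frac{1}{\alpha-1}\ln k$, producing exactly the stated inequality. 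I expect the main obstacle to be the second paragraph: the passage from the composite divergence to the per-state divergences, i.e.\ establishing the block/soft-max reduction together with the scaling identity, and in particular controlling the two sources of slack (the number-of-terms factor $k$ and the prior factors $p_x\ge p_{min}$) so that they combine into the clean $\max_x\tilde{D}_{\alpha}(\tau||\rho_x)$ defining the radius; the rest is divergence bookkeeping.
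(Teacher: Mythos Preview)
Your proof is correct and follows essentially the same route as the paper: invoke Theorem~\ref{t1}, apply Lemma~\ref{l3}, decompose the classical--quantum sandwiched divergence block by block, bound the resulting sum by a maximum, and identify with the radius via~\eqref{rd}. The only cosmetic difference is in the sum-to-max step: the paper writes the sum as the convex combination $\sum_x p_x\bigl(p_x^{-\alpha}k^{-\alpha}K_\alpha(\tau\|\rho_x)\bigr)$ and bounds it directly by the maximum term (which immediately produces the $\frac{\alpha}{\alpha-1}$ coefficient on $\ln p_{min}$), whereas you bound the raw $k$-term sum by $k$ times its largest summand---obtaining an intermediate constant that is in fact tighter by $\frac{1}{\alpha-1}\ln(kp_{min})\le 0$---and then weaken via $p_{min}\le 1/k$ to reach the stated form.
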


\begin{proof}
	In the proof, we denote $K_{\alpha}(\rho||\sigma)=\exp[(\alpha-1)\tilde{D}_{\alpha}(\rho||\sigma)].$	Due to the Theorem \ref{t1}, we have
	\begin{align}
	&-\ln(1-p_{succ}(\mathcal{E},\boldsymbol{p},\eta))\\=&\min_{\tau\in \tilde{\boldsymbol{D}}_A}D_H^{\epsilon}(\pi_X\otimes \tau_A||\rho_{XA})\\
	\le& \min_{\tau\in\tilde{\boldsymbol{D}}_A}[\tilde{D}_{\alpha}(\pi_X\otimes\tau_A||\rho_{XA})-\frac{\alpha}{\alpha-1}\ln(1-\epsilon)]\\
	=&\min_{\tau\in \tilde{\boldsymbol{D}}_A}[\frac{1}{\alpha-1}\ln \sum_x p_x\frac{p_x^{-\alpha}}{k^{\alpha}}K_{\alpha}(\tau||\rho_x)-\frac{\alpha}{\alpha-1}\ln(1-\epsilon)]\\
	\le&\min_{\tau\in\tilde{\boldsymbol{D}}_A}\max_x[\frac{1}{\alpha-1}\ln\frac{1}{p_x^{\alpha}k^{\alpha}}K_{\alpha}(\tau||\rho_x)-\frac{\alpha}{\alpha-1}\ln(1-\epsilon)]\\
	=&\min_{\tau\in\tilde{\boldsymbol{D}}_A}\max_x[\tilde{D}_{\alpha}(\tau||\rho_x)-\frac{1}{\alpha-1}\ln(p_x^{\alpha}k^{\alpha})-\frac{\alpha}{\alpha-1}\ln(1-\epsilon)]\\
	\le&\min_{\tau\in\tilde{\boldsymbol{D}}_A}\max_x[\tilde{D}_{\alpha}(\tau||\rho_x)-\frac{1}{\alpha-1}\ln(p_{min}^{\alpha}k^{\alpha})-\frac{\alpha}{\alpha-1}\ln(1-\epsilon)]\\
	=&\min_{\tau\in\tilde{\boldsymbol{D}}_A}\max_x[\tilde{D}_{\alpha}(\tau||\rho_x)-\frac{\alpha}{\alpha-1}(\ln(p_{min})+\ln(k-1+\eta))]\\
	=&R^{\tilde{D}_{\alpha}}(\{\rho_x\}_x)-\frac{\alpha}{\alpha-1}(\ln(p_{min})+\ln(k-1+\eta)).
	\end{align} 
	Here $\epsilon=\frac{1-\eta}{k}$, and the first inequality is due to Lemma \ref{l3}. In the third inequality, $p_{min}$ is the minimum of all the elements in $\{p_x\}_{x=1}^k$, the last equality is due to the equality (\ref{rd}).
\end{proof}

\begin{remark}
	Here we show that the upper bound of $p_{succ}(\mathcal{E},\boldsymbol{p},\eta)$ is computable through an SDP. As
	\begin{align*}
	&\min_{\tau\in \tilde{\boldsymbol{D}}_A}\max_x\tilde{D}_{\alpha}(\tau||\rho_x)\\
	\le& \min_{\tau\in \tilde{\boldsymbol{D}}_A}\max_x \widehat{D}_{\alpha}(\tau||\rho_x)\\
	=&\min_{\tau\in \tilde{\boldsymbol{D}}_A}\inf_{\mu}\{\mu|\mu\ge \widehat{D}_{\alpha}(\tau||\rho_x),x=1,2,\cdots,k\}
	\end{align*}
	By applying the result shown in \cite{fang2021geometric}, we arrive at the following semidefinite representation, for $\alpha=1+2^{-l}$ with $l\in \mathbb{Z}^{+}$,
	\begin{align*}
	&-\ln(1-p_{succ}(\mathcal{E},\boldsymbol{p},\eta))+\frac{\alpha}{\alpha-1}(\ln(p_{min})+\ln(k-1+\eta))\\
	\le&\min_{\tau\in \tilde{\boldsymbol{D}}_A}\max_x\tilde{D}_{\alpha}(\tau||\rho_x)\\
	\le&\min_{\tau\in \tilde{\boldsymbol{D}}_A}\max_x\widehat{D}_{\alpha}(\tau||\rho_x)\\
	=&\inf \frac{1}{\alpha-1}\ln\mu\\
	\textit{s. t.} \hspace{3mm}&\mu\ge \tr M\\
	&\begin{pmatrix}
	M&X\\
	X&N_l
	\end{pmatrix}\ge 0, \\
	&\begin{pmatrix}
	X&N_i\\
	N_i&N_{i-1}
	\end{pmatrix}\ge 0, \hspace{3mm} i=1,2,\cdots,l,\\
	& N_0=\rho_x, x=1,2,\cdots,k,\\
	& N_1,N_2,\cdots,N_l\in \boldsymbol{Herm}_A.
	\end{align*}
	We can apply the result in \cite{fawzi2017lieb} to obtain $\widehat{D}_{\alpha}(\cdot||\cdot)$ in terms of a generic rantional number $\alpha\in(1,2].$
\end{remark}

Next we present an upper bound on the asymptotic success probability for the game $(\mathcal{E},\boldsymbol{p},\eta).$

\begin{theorem}
	Assume $\mathcal{E}^{(n)}=\{\rho_x^{\otimes n}\}_{x=1}^k$ is a set of $k$ quantum states with $n$ copies, Alice prepares a state $\rho_x^{\otimes n}$ in $\mathcal{E}^{(n))}$ with its prior probability $p_x$, the optimal success probabillity for  $(\mathcal{E}=\{\rho^{\otimes n}_x\},\boldsymbol{p}=\{p_x\},\eta)$ is
	
	\begin{align}
	limsup_{n\rightarrow+\infty}-\frac{1}{n}\ln(1- p_{succ}(\mathcal{E},\boldsymbol{p},\eta))\le \min_{\tau\in \tilde{\boldsymbol{D}}_A}\tilde{D}_{\alpha}(\tau||\rho_x),
	\end{align}
	here $\alpha\in [1,\infty).$
\end{theorem}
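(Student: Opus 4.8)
The plan is to derive the asymptotic statement directly from the finite-size bound of Theorem~\ref{t2} by specializing it to the $n$-copy family $\mathcal{E}^{(n)}=\{\rho_x^{\otimes n}\}_{x=1}^k$ and then letting $n\to\infty$. The crucial observation is that the number of hypotheses $k$, the prior distribution $\boldsymbol{p}$ (hence $p_{min}$), and the slack $\eta$ are all held fixed while only the number of copies grows. Thus Theorem~\ref{t2} applied to $\mathcal{E}^{(n)}$ gives
\begin{align}
-\ln(1-p_{succ}(\mathcal{E}^{(n)},\boldsymbol{p},\eta))\le R^{\tilde{D}_{\alpha}}(\{\rho_x^{\otimes n}\}_x)-\frac{\alpha}{\alpha-1}(\ln(p_{min})+\ln(k-1+\eta)),
\end{align}
so after dividing by $n$ the additive correction is a fixed constant over $n$ and disappears in the limit, and everything reduces to controlling the growth rate of $R^{\tilde{D}_{\alpha}}(\{\rho_x^{\otimes n}\}_x)$.

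The key step is the single-letterization bound $R^{\tilde{D}_{\alpha}}(\{\rho_x^{\otimes n}\}_x)\le n\,R^{\tilde{D}_{\alpha}}(\{\rho_x\}_x)$. Using the characterization (\ref{rd}), I would restrict the infimum over test states on $\mathcal{H}^{\otimes n}$ to product states $\tau=\sigma^{\otimes n}$ with $\sigma\in\boldsymbol{D}_A$; restricting the feasible set can only increase the value, so this yields an upper bound. By the additivity of the extended sandwiched Rényi divergence under tensor products, $\tilde{D}_{\alpha}(\sigma^{\otimes n}||\rho_x^{\otimes n})=n\,\tilde{D}_{\alpha}(\sigma||\rho_x)$ for each $x$, and therefore
\begin{align}
R^{\tilde{D}_{\alpha}}(\{\rho_x^{\otimes n}\}_x)\le\inf_{\sigma\in\boldsymbol{D}_A}\max_x n\,\tilde{D}_{\alpha}(\sigma||\rho_x)=n\,R^{\tilde{D}_{\alpha}}(\{\rho_x\}_x).
\end{align}
Combining this with the displayed bound, dividing by $n$, and taking $\limsup_{n\to\infty}$, the $O(1/n)$ correction vanishes and I obtain
\begin{align}
\limsup_{n\to\infty}-\frac1n\ln(1-p_{succ}(\mathcal{E}^{(n)},\boldsymbol{p},\eta))\le R^{\tilde{D}_{\alpha}}(\{\rho_x\}_x)=\min_{\tau\in\tilde{\boldsymbol{D}}_A}\max_x\tilde{D}_{\alpha}(\tau||\rho_x),
\end{align}
which is the claimed bound for every $\alpha\in(1,\infty)$.

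For the endpoint $\alpha=1$, Theorem~\ref{t2} is not directly applicable because of the $\frac{\alpha}{\alpha-1}$ prefactor, so I would recover it by passing to the limit $\alpha\to1^+$. Since $\tilde{D}_{\alpha}$ is monotonically nondecreasing in $\alpha$ and converges to the Umegaki relative entropy as $\alpha\to1$, the left-hand side (which does not depend on $\alpha$) is bounded by $\inf_{\alpha>1}R^{\tilde{D}_{\alpha}}(\{\rho_x\}_x)=\lim_{\alpha\to1^+}R^{\tilde{D}_{\alpha}}(\{\rho_x\}_x)$, giving the $\alpha=1$ statement. The step I expect to require the most care is precisely this interchange of limits: one must justify that the infimum over test states $\tau$ commutes with the $\alpha\to1^+$ limit, for instance via continuity and monotonicity of $\alpha\mapsto\tilde{D}_{\alpha}(\tau||\rho_x)$ together with compactness of $\boldsymbol{D}_A$. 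By contrast, the tensor-power additivity and the vanishing of the correction term are routine.
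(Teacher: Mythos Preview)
Your proposal is correct and follows essentially the same route as the paper: apply Theorem~\ref{t2} to $\mathcal{E}^{(n)}$, restrict the minimizing state to tensor powers $\tau^{\otimes n}$, invoke additivity of $\tilde{D}_{\alpha}$ to pull out the factor $n$, divide by $n$ so the $\frac{\alpha}{\alpha-1}$ correction becomes $O(1/n)$ and vanishes, and finally handle the $\alpha=1$ endpoint via monotonicity of $\tilde{D}_{\alpha}$ in $\alpha$. Your discussion of the limit interchange at $\alpha\to 1^+$ is more careful than the paper's, which simply cites monotonicity (Lemma~\ref{l0}.2) without further comment.
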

\begin{proof}
	In the theorem, we apply Theorem \ref{t2} to the ensemble $\mathcal{E}$, we have that
	\begin{align}
	&-\frac{1}{n}\ln (1-p_{succ}(\mathcal{E},\boldsymbol{p},\eta))\\
	\le& \inf_{\alpha\in[1,\infty)}\min_{\tau^{(n)}\in \tilde{\boldsymbol{D}}_{A^{\otimes n}}}\max_x\frac{1}{n}[\tilde{D}_{\alpha}(\tau^{(n)}||\rho_x^{\otimes n})-\frac{\alpha}{\alpha-1}(\ln(p_{min})+\ln(k-1+\eta))]\\
	\le& \inf_{\alpha\in[1,\infty)}\min_{\tau\in \tilde{\boldsymbol{D}}_{A}}\max_x\frac{1}{n}[\tilde{D}_{\alpha}(\tau^{\otimes n}||\rho_x^{\otimes n})-\frac{\alpha}{\alpha-1}(\ln(p_{min})+\ln(k-1+\eta))]\\
	\le& \inf_{\alpha\in[1,\infty)}\min_{\tau\in \tilde{\boldsymbol{D}}_A}\max_x\tilde{D}_{\alpha}(\tau||\rho_x)-\frac{\alpha}{n(\alpha-1)}(\ln(p_{min})+\ln(k-1+\eta)),
	\end{align}
	Next when $n\rightarrow+\infty$, the above inequality becomes
	\begin{align}
	&\lim\limits_{n\rightarrow +\infty}-\frac{1}{n}\ln(1-p_{succ}(\mathcal{E},\boldsymbol{p},\eta))\\\le& \inf_{\alpha\in[1,\infty)}\min_{\tau\in \tilde{\boldsymbol{D}}_A}\max_x\tilde{D}_{\alpha}(\tau||\rho_x)\\
	\le& \min_{\tau\in \tilde{\boldsymbol{D}}_A}\max_x{D}(\tau||\rho_x)\\
	=&R^{D}(\{\rho_x\}_x),
	\end{align}
	here the first inequality is due to that $\frac{\alpha}{\alpha-1}(\ln(p_{min})+\ln(k-1+\eta))$ is finite, the second inequality is due to the monotonicity of the $\tilde{D}_{\alpha}(\cdot||\cdot)$ in terms of $\alpha$ (Lemma \ref{l0}.3),
\end{proof}

\subsection{Quantum channel discrimination}
\indent In multiple quantum channel discrimination, here we assume there exists a device that can implement a channel from the set of channels $\mathcal{F}=\{\mathcal{N}_i\in \boldsymbol{C}_{A\rightarrow B}\}_{i=1}^k$, and each $\mathcal{N}_i$ happens with prior probability $p_i$, $i=1,2,\cdots,k$, here $p_i>0,$ $\sum_i p_i=1,$ let $\boldsymbol{p}=\{p_i\}_{i=1}^k.$ The player aims to discriminate the quantum channel with the optimal success probability. When $k=2$, the game reduces to the binary channel discrimination.

The game applies a general adaptive protocol for quantum channel discrimination. Specifically, the protocol starts with a bipartite state $\tau_{RA}$, then the player passes one system $A$ of $\tau_{RA}$ through some channel $\mathcal{N}_i$ from the set $\{\mathcal{N}_i\}_{i=1}^k$,  with ancillary global processing, the protocol ends with measuring the state after the final invocation. Here we plot Figure \ref{fig1} for an illustration. Formally, an adaptive protocol is denoted as 
\begin{align}
\mathcal{P}^{(n)}=(\tau_{RA},\mathcal{A}_{[n-1]},\mathcal{M}),
\end{align} 
here $\tau_{RA}$ is a bipartite state, $\mathcal{A}_{i}\in \boldsymbol{C}_{R_iB_i\rightarrow R_{i+1}A_{i+1}}$ is a channel, $n$ denotes the number of $\mathcal{N}_x$ produced in the protocol, and $\mathcal{M}=\{M_i\}_{i=1}^k$ is a POVM. We also assume $A_i\cong A$ and $B_i\cong B$, $i=1,2,\cdots,n$. Next we label the following states
\begin{align}
\rho_{x,1}=&\tau_{R_1A_1},\\
\sigma_{x,i}^{R_iB_i}=&\mathcal{N}_{x}(\rho_{x,i}^{R_iA_i})\hspace{3mm} i=1,2,\cdots,n,\\
\rho^{R_{i+1}A_{i+1}}_{x,i+1}=&\mathcal{A}_{i}(\sigma_{x,i}^{R_iB_i}) \hspace{3mm} i=1,2,\cdots,n-1.
\end{align}

Let $\sigma_{x,n}$ be the final state when the channel $\mathcal{N}_x$ is invoked with $n$ times, the optimal success probability of multiple channel discrimination is 
\begin{align}
p_{succ}^{(n)}(\mathcal{F}=\{\mathcal{N}_i\}_{i=1}^k,\boldsymbol{p}=\{p_i\}_{i=1}^k,\mathcal{P}^{(n)},\eta)=\sup_{\{M_i\}_{x=0}^{k}}\sum_{i=1}^kp_x\tr(M_x\sigma_{x,n}).
\end{align}
Here $\{M_i\}_{i=0}^{k}$ takes over all the POVMs with $M_0\ge\eta\mathbb{I}$, and we introduce the following quantity $p_{succ}^{(n)}(\mathcal{F},\boldsymbol{p},\eta)$,
\begin{align}
p_{succ}^{(n)}(\mathcal{F},\boldsymbol{p},\eta)=\sup_{\mathcal{P}^{n}}p_{succ}^{(n)}(\mathcal{F}=\{\mathcal{N}_i\}_{i=1}^k,\boldsymbol{p}=\{p_i\}_{i=1}^k,\mathcal{P}^{(n)},\eta),
\end{align}
here $\mathcal{P}^{(n)}$ takes over all the approciate adaptive protocol and

\begin{figure}
	\centering
	\includegraphics[scale=0.6]{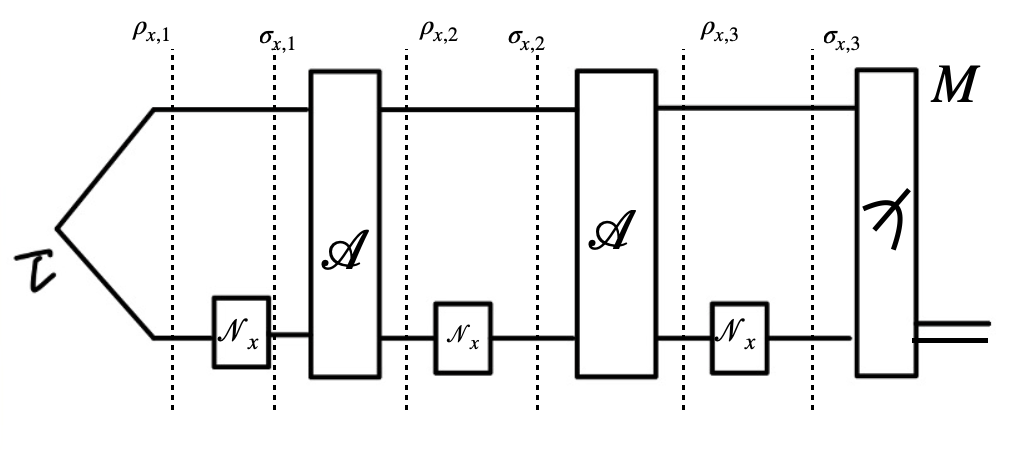}
	\caption{An adaptive strtegy for quantum channel discrimination when $n=3$.
		}\label{fig1}
\end{figure}

\begin{theorem}\label{t4}
	Let $\{\mathcal{N}_i\}_{i=1}^k$ be a set of quantum channels, a device implements a channel $\mathcal{N}_x$ with prior robability $p_i$, respectively. Then the optimal success probability of the revised unambiguous quantum channel discrimination is
	\begin{align}
	-\frac{1}{n}\ln(1-p_{succ}^{(n)}(\mathcal{F},\boldsymbol{p},\eta))\le \inf_{\mathcal{T}\in \boldsymbol{C}_{A\rightarrow B}}\sup_x\widehat{D}_{\alpha}(\mathcal{T}||\mathcal{N}_x)-\frac{\alpha}{n(\alpha-1)}(\ln(p_{min})+\ln(k-1+\eta)).
	\end{align}
	Here $\alpha\in(0,1].$
\end{theorem}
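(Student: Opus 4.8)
The plan is to reduce the $n$-shot adaptive channel problem to the one-shot state-discrimination bound already established in Theorem \ref{t2}, and then to collapse the adaptive structure using the chain rule for the geometric R\'enyi divergence. First I would fix an arbitrary adaptive protocol $\mathcal{P}^{(n)}=(\tau_{RA},\mathcal{A}_{[n-1]},\mathcal{M})$ and observe that, once the final states $\sigma_{x,n}$ are produced, the terminal measurement is exactly an instance of the revised unambiguous state-discrimination game: writing $Q_x=M_x$ for $x=1,\dots,k$, the constraint $M_0\ge\eta\mathbb{I}$ is equivalent to $\sum_{x=1}^k Q_x\le(1-\eta)\mathbb{I}$, so $p_{succ}^{(n)}(\mathcal{F},\boldsymbol{p},\mathcal{P}^{(n)},\eta)=P_{succ}^q(\{\sigma_{x,n}\},\boldsymbol{p},\eta)$. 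Applying Theorem \ref{t2} to the ensemble $\{\sigma_{x,n}\}$ then yields
\begin{align*}
-\ln(1-p_{succ}^{(n)}(\mathcal{F},\boldsymbol{p},\mathcal{P}^{(n)},\eta))\le R^{\tilde{D}_{\alpha}}(\{\sigma_{x,n}\})-\frac{\alpha}{\alpha-1}(\ln(p_{min})+\ln(k-1+\eta)).
\end{align*}

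Next I would pass from the sandwiched radius to the geometric one and pick a convenient comparator. Since $\tilde{D}_{\alpha}\le\widehat{D}_{\alpha}$ and $R^{\tilde{D}_{\alpha}}(\{\sigma_{x,n}\})=\min_{\omega}\max_x\tilde{D}_{\alpha}(\omega||\sigma_{x,n})$, any fixed choice of $\omega$ is an admissible upper bound, so $R^{\tilde{D}_{\alpha}}(\{\sigma_{x,n}\})\le\max_x\widehat{D}_{\alpha}(\omega||\sigma_{x,n})$. For a comparator channel $\mathcal{T}\in\boldsymbol{C}_{A\rightarrow B}$ I would take $\omega=\widetilde{\sigma}_n$, the output obtained by running the \emph{same} protocol with every $\mathcal{N}_x$ replaced by $\mathcal{T}$; that is, $\widetilde{\rho}_1=\tau_{R_1A_1}$, $\widetilde{\sigma}_i=\mathcal{T}(\widetilde{\rho}_i)$, and $\widetilde{\rho}_{i+1}=\mathcal{A}_i(\widetilde{\sigma}_i)$.

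The core of the argument is a telescoping estimate for $\widehat{D}_{\alpha}(\widetilde{\sigma}_n||\sigma_{x,n})$. Writing $\Delta_i=\widehat{D}_{\alpha}(\widetilde{\rho}_i||\rho_{x,i})$ and $\Gamma_i=\widehat{D}_{\alpha}(\widetilde{\sigma}_i||\sigma_{x,i})$, three ingredients combine: $\Delta_1=0$ because both branches start from $\tau_{RA}$; the data-processing inequality for $\widehat{D}_{\alpha}$ under the common post-processing channel $\mathcal{A}_i$ gives $\Delta_{i+1}\le\Gamma_i$; and the chain rule for the geometric R\'enyi divergence, $\widehat{D}_{\alpha}(\mathcal{T}(\widetilde{\rho}_i)||\mathcal{N}_x(\rho_{x,i}))\le\widehat{D}_{\alpha}(\widetilde{\rho}_i||\rho_{x,i})+\widehat{D}_{\alpha}(\mathcal{T}||\mathcal{N}_x)$, gives $\Gamma_i\le\Delta_i+\widehat{D}_{\alpha}(\mathcal{T}||\mathcal{N}_x)$. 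Iterating these relations yields $\Gamma_n\le n\,\widehat{D}_{\alpha}(\mathcal{T}||\mathcal{N}_x)$, hence $R^{\tilde{D}_{\alpha}}(\{\sigma_{x,n}\})\le n\max_x\widehat{D}_{\alpha}(\mathcal{T}||\mathcal{N}_x)$. Since the right-hand side no longer depends on the protocol, I would minimise over $\mathcal{T}$, take the supremum over all admissible protocols $\mathcal{P}^{(n)}$, divide by $n$, and read off the claimed bound.

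The step I expect to be the main obstacle is justifying the chain rule $\widehat{D}_{\alpha}(\mathcal{T}(\rho_{RA})||\mathcal{N}(\sigma_{RA}))\le\widehat{D}_{\alpha}(\rho_{RA}||\sigma_{RA})+\widehat{D}_{\alpha}(\mathcal{T}||\mathcal{N})$ with possibly \emph{different} input states on the two sides; this additivity-type property (from \cite{fang2021geometric}) is exactly what singles out the geometric R\'enyi divergence for adaptive problems and is the reason the sandwiched quantity $\tilde{D}_{\alpha}$ cannot be telescoped directly. Along the way I would confirm that the support condition $supp(\mathcal{J}_{\mathcal{T}})\subseteq supp(\mathcal{J}_{\mathcal{N}_x})$ holds for the chosen comparator so that every divergence term stays finite, and record that the final measurement, viewed as a quantum-to-classical channel, only decreases $\widehat{D}_{\alpha}$, so working with the pre-measurement states $\sigma_{x,n}$ is without loss of generality.
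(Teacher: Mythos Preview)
Your proposal is correct and follows essentially the same route as the paper: apply Theorem~\ref{t2} to the final ensemble $\{\sigma_{x,n}\}$, upper-bound the sandwiched radius by choosing the comparator state to be the output $\widetilde{\sigma}_n$ of the same protocol run with a reference channel $\mathcal{T}$, pass to the geometric R\'enyi divergence, and telescope via the chain rule together with data processing under the adaptive maps $\mathcal{A}_i$. The paper uses identical ingredients (citing \cite{fang2021geometric,berta2022} for the chain rule) and the same iteration $\widehat{D}_{\alpha}(\varepsilon_n\|\sigma_{x,n})\le n\,\widehat{D}_{\alpha}(\mathcal{T}\|\mathcal{N}_x)$; your additional remarks on support conditions and on the final measurement only decreasing $\widehat{D}_{\alpha}$ are sound but not needed beyond what the paper invokes.
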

\begin{proof}
	Let $\mathcal{P}^{(n)}=(\tau_{RA},\mathcal{A}_{[n-1]},\mathcal{M})$ be the adaptive protocol, and let $\mathcal{T}\in \boldsymbol{C}_{A\rightarrow B}$ be an arbitrary channel from $\mathcal{H}_A$ to $\mathcal{H}_B$. Next we denote 
	\begin{align}
	\omega_1=&\tau\\
	\varepsilon_i=&\mathcal{T}(\omega_i) \hspace{3mm} i=1,2,\cdots,n\\
	\omega_{i+1}=&\mathcal{A}_i(\varepsilon_i)\hspace{3mm} i=1,2,\cdots,n-1.
	\end{align} 
	
	Then by applying Theorem \ref{t2}, we have
	\begin{align}
	&-\ln(1-p_{succ}^{(n)}(\mathcal{F},\boldsymbol{p},\mathcal{P}^{(n)},\eta))\nonumber\\
	\le&\inf_{\tau\in \tilde{\boldsymbol{D}}_A}\sup_x[\tilde{D}_{\alpha}(\tau||\sigma_{x,n})]-\frac{\alpha}{\alpha-1}(\ln(p_{min})+\ln(k-1+\eta))\nonumber\\
	\le& \inf_{\mathcal{T}\in \boldsymbol{C}_{A\rightarrow B}}\sup_x \tilde{D}_{\alpha}(\epsilon_n||\sigma_{x,n})-\frac{\alpha}{\alpha-1}(\ln(p_{min})+\ln(k-1+\eta))\nonumber\\
	\le& \inf_{\mathcal{T}\in \boldsymbol{C}_{A\rightarrow B}}\sup_x \widehat{D}_{\alpha}(\epsilon_n||\sigma_{x,n})-\frac{\alpha}{\alpha-1}(\ln(p_{min})+\ln(k-1+\eta)),\label{f0}
	\end{align}
	here the second inequality is due to the data processing inequlity of $\tilde{D}_{\alpha}(\cdot||\cdot)$ \cite{beigi}, the last inequality is due to $\widehat{D}_{\alpha(\cdot||\cdot)}\ge \tilde{D}_{\alpha}(\cdot||\cdot)$ \cite{fang2021geometric}.

	Next by applying the chain rule of the geometric Renyi channel divergence \cite{fang2021geometric,berta2022}, we have
	\begin{align}
	\widehat{D}_{\alpha}(\epsilon_i||\sigma_{x,i})\le \widehat{D}_{\alpha}(\mathcal{T}||\mathcal{N}_x)+\widehat{D}_{\alpha}(\omega_i||\rho_{x,i})\hspace{3mm} i=1,2,\cdots,n,\label{f1}\\
	\widehat{D}_{\alpha}(\omega_{i+1}||\rho_{x,i+1})\le \widehat{D}_{\alpha}(\omega_i^{'}||\rho_{x,i}^{'})\hspace{3mm} i=1,2,\cdots,n-1,\label{f2}
	\end{align}
	the first inequality is due to the definition of $\omega_n^{'}$ and $\rho_{x,n}^{'}$ and (\ref{f0}), the second inequality is due to the definition of $\omega$.
	\begin{align}
	\widehat{D}_{\alpha}(\epsilon_n||\sigma_{x,n})\le& n\widehat{D}_{\alpha}(\mathcal{T}||\mathcal{N}_x)+\widehat{D}(\omega_1||\rho_{x,1})\\
	=&n\widehat{D}_{\alpha}(\mathcal{T}||\mathcal{N}_x),\label{f3}
	\end{align}
	where the first inequality is derived by recursively applying (\ref{f1}) and (\ref{f2}). Combing (\ref{f3}) and (\ref{f0}), we have
	\begin{align}
	-\frac{1}{n}\ln(1-p_{succ}^{(n)}(\mathcal{F},\boldsymbol{p},\mathcal{P}^{(n)},\eta))\le \inf_{\mathcal{T}\in \boldsymbol{C}_{A\rightarrow B}}\sup_x\widehat{D}_{\alpha}(\mathcal{T}||\mathcal{N}_x)-\frac{\alpha}{n(\alpha-1)}(\ln(p_{min})+\ln(k-1+\eta)).\label{f4}
	\end{align}
	
	At last, as (\ref{f4}) holds for any adaptive strategy $\mathcal{P}^{(n)}$, 
	\begin{align}
	&-\frac{1}{n}\ln(1- p_{succ}^{(n)}(\mathcal{F},\boldsymbol{p},\eta))\nonumber\\
	=&\sup_{\mathcal{P}^{(n)}}\{-\frac{1}{n}\ln(1-p^{(n)}_{succ}(\mathcal{P}^{(n)},\eta)))|\mathcal{A}_i\in \boldsymbol{C}_{R_iA\rightarrow R_{i+1}B}\}\nonumber\\
	\le&\inf_{\mathcal{T}\in \boldsymbol{C}_{A\rightarrow B}}\sup_x\widehat{D}_{\alpha}(\mathcal{T}||\mathcal{N}_x)-\frac{\alpha}{n\alpha-n}(\ln(p_{min})+\ln(k-1+\eta)).
	\end{align}
	Hence, we finish the proof.
\end{proof}
\begin{remark}
	Here we present that the converse bound in Theorem \ref{t4} can be represented by an SDP, hence, the bound is efficiently computable. 
	\begin{align}
	&\inf_{\mathcal{T}\in \boldsymbol{C}_{A\rightarrow B}}\sup_x\widehat{D}_{\alpha}(\mathcal{T}||\mathcal{N}_x)\\
	=&\inf_{\mathcal{T}\in \boldsymbol{C}_{A\rightarrow B}}\inf_{\lambda}\{\lambda|\lambda\ge\widehat{D}_{\alpha}(\mathcal{T}||\mathcal{N}_x),x=1,2,\cdots,k\}.\label{f5}
	\end{align}
	Combing (\ref{f5}) with the result in \cite{fang2021geometric}, we have a SDP for the converse bound, let $\alpha=1+2^{-l}\in (1,2]$ with $l\in \mathbb{Z}^{+}$,
	\begin{align}
	\inf_{\mathcal{T}\in \boldsymbol{C}_{A\rightarrow B}}\sup_x\widehat{D}_{\alpha}(\mathcal{T}||\mathcal{N}_x)=\inf 2^{l}\ln\lambda&\\
	\textit{s. t. } \begin{pmatrix}
	M_x& J_{\mathcal{T}}\\
	J_{\mathcal{T}}& N_{l+1,x}
	\end{pmatrix}\ge&0, \hspace{3mm}
	\begin{pmatrix}
	J_{\mathcal{T}}&N_{i+1,x}\\
	N_{i+1,x}&N_{i,x}
	\end{pmatrix}\ge 0,\\
	N_{1,x}=J_{\mathcal{N}_x},\hspace{5mm}&
	\lambda\mathbb{I}_A\ge \tr_BM_{x,AB}, &\\
M_{x}\in \boldsymbol{Herm}_{AB},&\hspace{5mm}
N_{i+1,x}\in \boldsymbol{Herm}_{AB},\\	i=1,2,&\cdots,l,\hspace{5mm}
	x=1,2,\cdots,k.
	\end{align}
	For a general number $\alpha\in (1,2]$, a SDP can be obtained with the use of Theorem 3 in \cite{fawzi2017lieb,fang2021geometric}.
\end{remark}

\begin{Corollary}
	Let $\{\mathcal{N}_i\}_{i=1}^k$ be a set of quantum channels, a device implements a channel $\mathcal{N}_x$ with prior robability $p_i$, respectively,  then the optimal success probability of the revised unambiguous quantum channel discrimination is
	\begin{align}
	\lim\sup\limits_{n\rightarrow \infty}-\frac{1}{n}\ln(1-p_{succ}^{(n)}(\mathcal{P}^{(n)},\eta))\le \inf_{\mathcal{T}\in \boldsymbol{C}_{A\rightarrow B}}\sup_x\overline{D}(\mathcal{T}||\mathcal{N}_x).
	\end{align}
\end{Corollary}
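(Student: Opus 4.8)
The plan is to derive this Corollary as the asymptotic ($n\to\infty$) limit of Theorem \ref{t4}, in exact parallel with how the asymptotic state-discrimination theorem follows from Theorem \ref{t2}. First I would fix a rational $\alpha=1+2^{-l}\in(1,2]$ and invoke the bound of Theorem \ref{t4}, namely
\begin{align*}
-\frac{1}{n}\ln(1-p_{succ}^{(n)}(\mathcal{P}^{(n)},\eta))\le \inf_{\mathcal{T}\in \boldsymbol{C}_{A\rightarrow B}}\sup_x\widehat{D}_{\alpha}(\mathcal{T}||\mathcal{N}_x)-\frac{\alpha}{n(\alpha-1)}(\ln(p_{min})+\ln(k-1+\eta)).
\end{align*}
The key observation is that the additive correction term carries the prefactor $1/n$, while the quantities $\ln(p_{min})$ and $\ln(k-1+\eta)$ are finite constants independent of $n$ (they depend only on the fixed ensemble and on $\eta$), and $\alpha/(\alpha-1)$ is a finite constant for fixed $\alpha>1$. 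Hence as $n\to\infty$ this whole correction term vanishes, and taking $\limsup_{n\to\infty}$ on both sides leaves
\begin{align*}
\limsup\limits_{n\rightarrow \infty}-\frac{1}{n}\ln(1-p_{succ}^{(n)}(\mathcal{P}^{(n)},\eta))\le \inf_{\mathcal{T}\in \boldsymbol{C}_{A\rightarrow B}}\sup_x\widehat{D}_{\alpha}(\mathcal{T}||\mathcal{N}_x).
\end{align*}

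The second step is to pass from the geometric R\'enyi channel divergence $\widehat{D}_{\alpha}$ to the Belavkin--Staszewski channel divergence $\overline{D}$ by sending $\alpha\to 1$. Since the left-hand side no longer depends on $\alpha$, the inequality holds simultaneously for every $\alpha\in(1,2]$, so I may take the infimum over $\alpha$, or equivalently the limit $\alpha\to 1^+$. Here I would appeal to the limiting relation (\ref{bs}), $\lim_{\alpha\to 1}\widehat{D}_{\alpha}(\rho||\sigma)=\overline{D}(\rho||\sigma)$, lifted to the channel level via the variational definition $\widehat{D}_{\alpha}(\mathcal{N}||\mathcal{M})=\sup_{\rho}\widehat{D}_{\alpha}(\mathcal{N}(\rho)||\mathcal{M}(\rho))$; monotonicity of $\widehat{D}_{\alpha}$ in $\alpha$ (the channel analogue of Lemma \ref{l0}.3) guarantees the limit is an infimum, so no subtlety about the direction of the inequality arises. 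This yields $\inf_{\alpha}\inf_{\mathcal{T}}\sup_x\widehat{D}_{\alpha}(\mathcal{T}||\mathcal{N}_x)\le \inf_{\mathcal{T}}\sup_x\overline{D}(\mathcal{T}||\mathcal{N}_x)$, giving the stated bound.

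The main obstacle I anticipate is the interchange of limits and suprema/infima in the second step, specifically justifying $\lim_{\alpha\to1}\inf_{\mathcal{T}}\sup_x\widehat{D}_{\alpha}(\mathcal{T}||\mathcal{N}_x)\le\inf_{\mathcal{T}}\sup_x\overline{D}(\mathcal{T}||\mathcal{N}_x)$. A clean way around this is to avoid swapping altogether: for \emph{any} fixed channel $\mathcal{T}$ and fixed $x$, monotonicity gives $\widehat{D}_{\alpha}(\mathcal{T}||\mathcal{N}_x)\le\overline{D}(\mathcal{T}||\mathcal{N}_x)$ for all $\alpha\in(1,2]$ (since $\overline{D}$ is the $\alpha\to1^+$ limit and $\widehat{D}_\alpha$ is monotonically nondecreasing in $\alpha$), hence $\inf_{\mathcal{T}}\sup_x\widehat{D}_{\alpha}(\mathcal{T}||\mathcal{N}_x)\le\inf_{\mathcal{T}}\sup_x\overline{D}(\mathcal{T}||\mathcal{N}_x)$ holds termwise, and combining with the $n\to\infty$ bound above immediately yields the claim without any limit interchange. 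I would also take care to note that the $\limsup$ rather than $\lim$ is used on the left because existence of the genuine limit is not asserted, and that the correction term's vanishing relies on $p_{min}>0$ and $k-1+\eta>0$, which hold since all priors are strictly positive and $\eta\ge 0$. The remaining steps are routine and follow the template already established for the state case.
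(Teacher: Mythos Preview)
Your overall architecture matches the paper's proof exactly: invoke Theorem~\ref{t4}, let $n\to\infty$ so that the correction term $\frac{\alpha}{n(\alpha-1)}(\ln p_{\min}+\ln(k-1+\eta))$ vanishes, then optimise over $\alpha\in(1,2]$ and identify the $\alpha\to1^{+}$ limit with the Belavkin--Staszewski channel divergence. That part is fine.

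The gap is in your ``clean way around'' paragraph: the monotonicity inequality is stated backwards. You write that $\widehat{D}_{\alpha}(\mathcal{T}||\mathcal{N}_x)\le\overline{D}(\mathcal{T}||\mathcal{N}_x)$ because $\widehat{D}_{\alpha}$ is nondecreasing in $\alpha$ and $\overline{D}$ is the $\alpha\to1^{+}$ limit. But nondecreasing in $\alpha$ together with $\overline{D}=\lim_{\alpha\downarrow 1}\widehat{D}_{\alpha}$ forces the \emph{opposite} direction, $\overline{D}(\mathcal{T}||\mathcal{N}_x)\le\widehat{D}_{\alpha}(\mathcal{T}||\mathcal{N}_x)$ for every $\alpha\in(1,2]$. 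So the termwise comparison you propose yields $\inf_{\mathcal{T}}\sup_x\overline{D}\le\inf_{\mathcal{T}}\sup_x\widehat{D}_{\alpha}$, which is the wrong way for the Corollary.

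The fix is to revert to the argument you sketched just before that paragraph, which is also what the paper does. Since the bound holds for every $\alpha$, take the infimum over $\alpha\in(1,2]$. The two infima $\inf_{\alpha}\inf_{\mathcal{T}}$ commute unconditionally. For fixed $\mathcal{T}$, the inner quantity $\sup_{x}\widehat{D}_{\alpha}(\mathcal{T}||\mathcal{N}_x)$ is a maximum over finitely many $x$, each term of which is nondecreasing in $\alpha$ and converges to $\overline{D}(\mathcal{T}||\mathcal{N}_x)$ as $\alpha\downarrow 1$; hence the finite maximum also converges, giving $\inf_{\alpha}\sup_{x}\widehat{D}_{\alpha}(\mathcal{T}||\mathcal{N}_x)=\sup_{x}\overline{D}(\mathcal{T}||\mathcal{N}_x)$. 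No delicate interchange is needed beyond swapping two infima and passing a limit through a finite max.
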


\begin{proof}
	When $n\rightarrow\infty$, Theorem \ref{t4} turns into 
	\begin{align}
	&\lim\sup\limits_{n\rightarrow\infty}-\frac{1}{n}\ln(1-p_{succ}^{(n)}(\{\mathcal{E},\boldsymbol{p}\}_{i=1}^k,\mathcal{P}^{(n)},\eta))\nonumber\\
	\le&\inf_{\alpha\in(1,2]}\inf_{\mathcal{T}\in \boldsymbol{C}_{A\rightarrow B}}\sup_x \widehat{D}_{\alpha}(\mathcal{T}||\mathcal{N}_x)\\
	=&\inf_{\mathcal{T}\in \boldsymbol{C}_{A\rightarrow B}}\sup_x\overline{D}(\mathcal{T}||\mathcal{N}_x).
	\end{align}
	Here the first equalilty is due to the monotonicity of $\widehat{D}_{\alpha}(\cdot||\cdot)$ in terms of $\alpha$, and when $\alpha\rightarrow 1$, $\widehat{D}_{\alpha}(\cdot||\cdot)$ turns into the Belavkin-Staszewski divergence (\ref{bs}).
\end{proof}

\subsection{Advantage of quantum resource discrimiation}
\indent In this subsection, we first introduce a quantity on a set of positive semi-definite operators, then we present an operational interpretation on the function. By investigating the function, we obtain the advantage of the revised quantum state unambiguous discrimination compared with the classical.

Assume $\{Q_k\}_k$ is a set of semidefinite positive matrices,  $q_{re}(\{Q_k\})$ is defined as
\begin{align}
q_{re}(\{Q_k\})=&\min \mathrm{Tr}(P)\label{re 1}\\
\textit{s. t.} \hspace{3mm} & P \ge Q_k, \nonumber\\
& Q_k \ge 0\hspace{3mm} \forall k\nonumber.
\end{align}

We will show some properties of $q_{re}(\cdot)$ in the section \ref{app2}.

Next we consider a weaker quantum state unambiguous discrimination game. Assume Alice prepares $k$ states $\rho_x$ $ (x=1,2,\cdots,k)$ with a priori probability $p_x$, respectively, if the state produced is not known, then the state generated can be written as a mixed state
\begin{align}
\rho=\sum_xp_x\rho_x, \hspace{5mm} \textit{ with} \hspace{3mm}\sum_x p_x=1.
\end{align}
Bob can apply a POVM $\mathcal{M}=\{Q_1,Q_2,\cdots,Q_{k},Q_{\emptyset}\}$ with $(k+1)$ outcomes to detect Alice's state. In this scenario, Bob is allowed to waive the round when $Q_{\emptyset}$ happens. In this scenario, it is not necessary that the element $Q_k$ of $\mathcal{M}$ corresponds only to the state $\rho_k$.  When the detection outcome corresponding to $Q_{\emptyset}$ happens, the player will waive the round. Besides, the probability with the outcome corresponding to $M_{\emptyset}$ happens at least $\eta$, $i.$ $e.$ $Q_{\emptyset}\ge \eta\mathbb{I}$. The goal for Bob is to maximize the success probability of the game,
\begin{align}
P_{succ}^q(\mathcal{E},\boldsymbol{p},\eta)=&\max\sum_x p_x\mathrm{Tr}(\rho_xQ_x),
\label{ud1}\\
\textit{s. t.}\hspace{3mm}&  \sum_x Q_x\le (1-\eta)\mathbb{I},\nonumber\\
&Q_x\ge 0,\nonumber
\end{align}

At last, we present the maximal probability in the classical scenario, which is denoted as the unambiguous discrimination of classical information. In this scenario, the player can waive the round with probability at least $\eta$, hence the maximal success probability in classical scenario is 
\begin{align}
P_{succ}^c({\boldsymbol{p},\eta})=(1-\eta)\max_xp_x,
\end{align}

Now we  can present the operational interpretation of $q_{re}(\cdot)$.

\begin{theorem}\label{t3}
	Assume $\mathcal{E}=\{\rho_x\}_{x=0}^{n-1}$ is a set of quantum states, then there exists a quantity $\eta_{*}$ such that 
	\begin{align*}
	q_{re}(\{\rho_x\})	=\max_{p>0}\frac{P_{succ}^q(\mathcal{E},\boldsymbol{p},\eta)}{P_{succ}^c(\boldsymbol{p},\eta)},
	\end{align*}
	here the maximum is attained with $\{p_x=\frac{1}{k}\}$.
\end{theorem}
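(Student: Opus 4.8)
The plan is to reduce the stated ratio to a ratio involving $q_{re}$ of weighted states, and then use positive homogeneity and monotonicity of $q_{re}$ to pin the maximizer at the uniform distribution. First I would recall the dual form of $P_{succ}^q$ obtained at the start of the chain (\ref{r1}) in the proof of Theorem \ref{t1}: by SDP strong duality applied to the primal (\ref{ud}),
\[
P_{succ}^q(\mathcal{E},\boldsymbol{p},\eta)=\min_{\tau}\{(1-\eta)\mathrm{Tr}\,\tau\mid \tau\ge p_x\rho_x,\ x=1,2,\dots,k\}.
\]
Comparing the minimization on the right with the definition (\ref{re 1}) of $q_{re}$ (take $P=\tau$ and $Q_k=p_x\rho_x$, which are positive semidefinite since $p_x>0$), this reads $P_{succ}^q(\mathcal{E},\boldsymbol{p},\eta)=(1-\eta)\,q_{re}(\{p_x\rho_x\})$. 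Dividing by $P_{succ}^c(\boldsymbol{p},\eta)=(1-\eta)\max_x p_x$, the factor $(1-\eta)$ cancels and the ratio collapses to $q_{re}(\{p_x\rho_x\})/\max_x p_x$, which no longer depends on $\eta$. This already explains why the claimed identity can be realized for the asserted $\eta_{*}$; in fact it holds for every $\eta\in[0,1)$.

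Next I would invoke two elementary properties of $q_{re}$ to be recorded in Section \ref{app2}. Positive homogeneity, $q_{re}(\{cQ_k\})=c\,q_{re}(\{Q_k\})$ for $c>0$, follows since $P\ge Q_k$ is equivalent to $cP\ge cQ_k$ and $\mathrm{Tr}(cP)=c\,\mathrm{Tr}(P)$. Monotonicity—$Q_k\le Q_k'$ for all $k$ implies $q_{re}(\{Q_k\})\le q_{re}(\{Q_k'\})$—follows because any $P$ dominating the larger family also dominates the smaller one, so the feasible set only grows. For the upper bound I use $p_x\rho_x\le(\max_j p_j)\rho_x$ (valid because $\rho_x\ge0$): monotonicity gives $q_{re}(\{p_x\rho_x\})\le q_{re}(\{(\max_j p_j)\rho_x\})$, and homogeneity turns the right-hand side into $(\max_j p_j)\,q_{re}(\{\rho_x\})$. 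Dividing by $\max_j p_j$ yields $P_{succ}^q/P_{succ}^c\le q_{re}(\{\rho_x\})$ for every admissible $\boldsymbol{p}$.

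Finally I would check that this bound is attained at the uniform distribution $p_x=\tfrac1k$: there $\max_x p_x=\tfrac1k$, and homogeneity gives $q_{re}(\{\tfrac1k\rho_x\})=\tfrac1k\,q_{re}(\{\rho_x\})$, so the ratio equals exactly $q_{re}(\{\rho_x\})$. Combined with the previous paragraph, this shows $\max_{\boldsymbol{p}>0}P_{succ}^q/P_{succ}^c=q_{re}(\{\rho_x\})$, attained at $\{p_x=\tfrac1k\}$.

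Since the only nontrivial analytic input—the duality $P_{succ}^q=(1-\eta)q_{re}(\{p_x\rho_x\})$—is already delivered by Theorem \ref{t1}, the remaining work is essentially bookkeeping. The one place calling for care is the clean verification of homogeneity and monotonicity of $q_{re}$, together with confirming that the supremum over the open set $\{\boldsymbol{p}>0\}$ is genuinely attained rather than only approached; the explicit uniform maximizer settles the latter.
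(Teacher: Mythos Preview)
Your argument is correct and arguably cleaner than the paper's, but it proceeds by the dual route. The paper identifies, via Lemma~\ref{l2}, the constrained maximum $\max\{\sum_x\tr(Y_xQ_x):\sum_xY_x\le(1-\eta)\mathbb{I},\,Y_x\ge0\}$ with $(1-\eta)q_{re}(\{Q_x\})$ for $\eta\ge\eta_*$, and then bounds the primal form of $P_{succ}^q$ directly using $p_x\le\max_yp_y$ inside the sum $\sum_xp_x\tr(Y_x\rho_x)$. You instead match the \emph{dual} of $P_{succ}^q$ (the first line of (\ref{r1}), also derived in Section~\ref{app2}) with the \emph{primal} (\ref{re 1}) of $q_{re}$, obtaining $P_{succ}^q=(1-\eta)\,q_{re}(\{p_x\rho_x\})$, and then exploit monotonicity and homogeneity of $q_{re}$. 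Your route has two payoffs: it bypasses Lemma~\ref{l2} entirely, and it shows the identity holds for every $\eta\in[0,1)$ rather than only above some threshold $\eta_*$ (indeed the factor $1-\eta$ cancels immediately). The paper's route, on the other hand, keeps everything on the primal side of $P_{succ}^q$ and so makes explicit the link to the feasible measurements. One small correction: the dual form you invoke is not the \emph{content} of Theorem~\ref{t1} but rather an input to its proof, with the SDP derivation given separately in Section~\ref{app2}; the cross-reference should point there.
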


Here we place the proof of Theorem \ref{t3} in the section \ref{app2}. Due to the property of $q_{re}(\{\rho_x\}_{x=0}^{n-1}),$ $q_{re}(\{\rho_x\}_{x=0}^{n-1})\ge \max_x\tr\rho_x=1$, and there exists $\{\rho_x\}_{x=0}^{n-1}$ such that $q_{re}(\{\rho_x\}_{x=0}^{n-1})>1.$ That is, quantum games owns advantages over the classical in the unambiguous state discrimination games.

\section{Conclusion}
\indent In this manuscript, we studied a revised quantum state and channel discrimination under the nonasymptotic and asymptotic scenarios. Moreover, we presented the advantages of discrimination games between the quantum and classical.

In the study of quantum state discrimination, we mainly considered a revised quantum state unambiguous discrimination in the nonasymptotic and asymptotic scenarios. First, we presented an analytical expression of the success probability of the revised quantum state unambiguous discrimination, then we generalized the results to an upper bound of the success probability in terms of the sandwiched Renyi divergence $\tilde{D}_{\alpha}(\cdot||\cdot)$ in the asymptotic scenario. Furthermore, we loosened the bound which can be computed efficiently with the use of semidefinite programming. Then we generalized the method in the study of revised quantum state discrimination to the quantum channel discrimination under the adaptive strategy. Specifically, we presented an upper bound of the quantum channel discrimination under the multi-shot scenario and asymptotic scenario. Furthermore, we presented an efficient method to compute the success probability of the quantum revised channel discrimination under the asymptotic scenario. Finally, compared with classical unambiguous discrimination, we showed the advantages of the quantum by studying a function on a set of semidefinite positive operators.

\section*{Acknowledgements}
\label{sec:acknowledgements}
This work was supported by the National Natural Science Foundation of China (Grant No.12301580)

\bibliographystyle{IEEEtran}
\bibliography{ref}

\begin{thebibliography}{10}
\providecommand{\url}[1]{#1}
\csname url@samestyle\endcsname
\providecommand{\newblock}{\relax}
\providecommand{\bibinfo}[2]{#2}
\providecommand{\BIBentrySTDinterwordspacing}{\spaceskip=0pt\relax}
\providecommand{\BIBentryALTinterwordstretchfactor}{4}
\providecommand{\BIBentryALTinterwordspacing}{\spaceskip=\fontdimen2\font plus
\BIBentryALTinterwordstretchfactor\fontdimen3\font minus
  \fontdimen4\font\relax}
\providecommand{\BIBforeignlanguage}[2]{{%
\expandafter\ifx\csname l@#1\endcsname\relax
\typeout{** WARNING: IEEEtran.bst: No hyphenation pattern has been}%
\typeout{** loaded for the language `#1'. Using the pattern for}%
\typeout{** the default language instead.}%
\else
\language=\csname l@#1\endcsname
\fi
#2}}
\providecommand{\BIBdecl}{\relax}
\BIBdecl

\bibitem{nielsen}
M.~A. Nielsen and I.~L. Chuang, \emph{Quantum computation and quantum
  information}, 2001.

\bibitem{wilde}
S.~Khatri and M.~M. Wilde, ``Principles of quantum communication theory: A
  modern approach,'' \emph{arXiv:2011.04672}.

\bibitem{gyong}
L.~Gyongyosi and S.~Imre, ``A survey on quantum computing technology,''
  \emph{Computer Science Review}, vol.~31, pp. 51--71, 2019.

\bibitem{yang2023}
Z.~Yang, M.~Zolanvari, and R.~Jain, ``A survey oof important issues in quantum
  computing and communications,'' \emph{IEEE Communications Surveys \&
  Tutorials}, vol.~25, no.~2, pp. 1059--1094, 2023.

\bibitem{bergou2010discrimination}
J.~A. Bergou, ``Discrimination of quantum states,'' \emph{Journal of Modern
  Optics}, vol.~57, no.~3, pp. 160--180, 2010.

\bibitem{bae2015quantum}
J.~Bae and L.-C. Kwek, ``Quantum state discrimination and its applications,''
  \emph{Journal of Physics A: Mathematical and Theoretical}, vol.~48, no.~8, p.
  083001, 2015.

\bibitem{yuen1975optimum}
H.~Yuen, R.~Kennedy, and M.~Lax, ``Optimum testing of multiple hypotheses in
  quantum detection theory,'' \emph{IEEE transactions on information theory},
  vol.~21, no.~2, pp. 125--134, 1975.

\bibitem{chefles}
A.~Chefles, ``Quantum state discrimination,'' \emph{Contemporary Physics},
  vol.~41, no.~6, pp. 401--424, 2000.

\bibitem{bou}
J.~A. Bergou, ``Quantum state discrimination and selected applications,''
  vol.~84, no.~1.\hskip 1em plus 0.5em minus 0.4em\relax IOP Publishing, 2007,
  p. 012001.

\bibitem{carmeli2022quantum}
C.~Carmeli, T.~Heinosaari, and A.~Toigo, ``Quantum guessing games with
  posterior information,'' \emph{Reports on Progress in Physics}, vol.~85,
  no.~7, p. 074001, 2022.

\bibitem{wootters1982single}
``A single quantum cannot be cloned,'' \emph{Nature}, vol. 299, no. 5886, pp.
  802--803, 1982.

\bibitem{leverrier2009unconditional}
A.~Leverrier and P.~Grangier, ``Unconditional security proof of long-distance
  continuous-variable quantum key distribution with discrete modulation,''
  \emph{Physical review letters}, vol. 102, no.~18, p. 180504, 2009.

\bibitem{namkung2024analysis}
M.~Namkung and Y.~Kwon, ``Analysis of quantum key distribution based on unified
  model of sequential state discrimination strategy,'' \emph{Scientific
  Reports}, vol.~14, no.~1, p. 10254, 2024.

\bibitem{wu2008exact}
X.~Wu and R.~Duan, ``Exact quantum search by parallel unitary discrimination
  schemes,'' \emph{Physical Review A—Atomic, Molecular, and Optical Physics},
  vol.~78, no.~1, p. 012303, 2008.

\bibitem{childs2016optimal}
A.~M. Childs and J.~Young, ``Optimal state discrimination and unstructured
  search in nonlinear quantum mechanics,'' \emph{Physical Review A}, vol.~93,
  no.~2, p. 022314, 2016.

\bibitem{zhuang2017optimum}
Q.~Zhuang, Z.~Zhang, and J.~H. Shapiro, ``Optimum mixed-state discrimination
  for noisy entanglement-enhanced sensing,'' \emph{Physical review letters},
  vol. 118, no.~4, p. 040801, 2017.

\bibitem{pirandola2018advances}
S.~Pirandola, B.~R. Bardhan, T.~Gehring, C.~Weedbrook, and S.~Lloyd, ``Advances
  in photonic quantum sensing,'' \emph{Nature Photonics}, vol.~12, no.~12, pp.
  724--733, 2018.

\bibitem{helstrom}
C.~W. Helstrom, \emph{Quantum Detection and Estimation}, 1976.

\bibitem{Holevo}
A.~S. Holevo, \emph{Probabilistic and Statistical Aspects of Quantum Theory},
  1979.

\bibitem{ha2021quantum}
D.~Ha and Y.~Kwon, ``Quantum nonlocality without entanglement: explicit
  dependence on prior probabilities of nonorthogonal mirror-symmetric states,''
  \emph{npj Quantum Information}, vol.~7, no.~1, p.~81, 2021.

\bibitem{ivanovic1987differentiate}
I.~D. Ivanovic, ``How to differentiate between non-orthogonal states,''
  \emph{Physics Letters A}, vol. 123, no.~6, pp. 257--259, 1987.

\bibitem{bergou2012optimal}
J.~A. Bergou, U.~Futschik, and E.~Feldman, ``Optimal unambiguous discrimination
  of pure quantum states,'' \emph{Physical review letters}, vol. 108, no.~25,
  p. 250502, 2012.

\bibitem{herzog2012optimal}
U.~Herzog, ``Optimal state discrimination with a fixed rate of inconclusive
  results: Analytical solutions and relation to state discrimination with a
  fixed error rate,'' \emph{Physical Review A—Atomic, Molecular, and Optical
  Physics}, vol.~86, no.~3, p. 032314, 2012.

\bibitem{croke2006maximum}
S.~Croke, E.~Andersson, S.~M. Barnett, C.~R. Gilson, and J.~Jeffers, ``Maximum
  confidence quantum measurements,'' \emph{Physical review letters}, vol.~96,
  no.~7, p. 070401, 2006.

\bibitem{bergou2013extracting}
J.~Bergou, E.~Feldman, and M.~Hillery, ``Extracting information from a qubit by
  multiple observers:toward a theory of sequential state discrimination,''
  \emph{Physical review letters}, vol. 111, no.~10, p. 100501, 2013.

\bibitem{fields2020sequential}
D.~Fields, J.~A. Bergou, and {\'A}.~Varga, ``Sequential measurements on qubits
  by multiple observers: Joint best guess strategy,'' in \emph{2020 IEEE
  International Conference on Quantum Computing and Engineering (QCE)}.\hskip
  1em plus 0.5em minus 0.4em\relax IEEE, 2020, pp. 235--241.

\bibitem{harrow2010adaptive}
A.~W. Harrow, A.~Hassidim, D.~W. Leung, and J.~Watrous, ``Adaptive versus
  nonadaptive strategies for quantum channel discrimination,'' \emph{Physical
  Review A—Atomic, Molecular, and Optical Physics}, vol.~81, no.~3, p.
  032339, 2010.

\bibitem{cooney2016strong}
T.~Cooney, M.~Mosonyi, and M.~M. Wilde, ``Strong converse exponents for a
  quantum channel discrimination problem and quantum-feedback-assisted
  communication,'' \emph{Communications in Mathematical Physics}, vol. 344,
  no.~3, pp. 797--829, 2016.

\bibitem{yuan2017fidelity}
H.~Yuan and C.-H.~F. Fung, ``Fidelity and fisher information on quantum
  channels,'' \emph{New Journal of Physics}, vol.~19, no.~11, p. 113039, 2017.

\bibitem{pirandola2019fundamental}
S.~Pirandola, R.~Laurenza, C.~Lupo, and J.~L. Pereira, ``Fundamental limits to
  quantum channel discrimination,'' \emph{npj Quantum Information}, vol.~5,
  no.~1, p.~50, 2019.

\bibitem{zhuang2020ultimate}
Q.~Zhuang and S.~Pirandola, ``Ultimate limits for multiple quantum channel
  discrimination,'' \emph{Physical Review Letters}, vol. 125, no.~8, p. 080505,
  2020.

\bibitem{wilde2020amortized}
M.~M. Wilde, M.~Berta, C.~Hirche, and E.~Kaur, ``Amortized channel divergence
  for asymptotic quantum channel discrimination,'' \emph{Letters in
  Mathematical Physics}, vol. 110, pp. 2277--2336, 2020.

\bibitem{fang2021towards}
K.~Fang, G.~Gour, and X.~Wang, ``Towards the ultimate limits of quantum channel
  discrimination,'' \emph{arXiv preprint arXiv:2110.14842}, 2021.

\bibitem{salek2021asymptotic}
F.~Salek, M.~Hayashi, and A.~Winter, ``Asymptotic separation between adaptive
  and non-adaptive strategies in quantum channel discrimination,'' in
  \emph{2021 IEEE International Symposium on Information Theory (ISIT)}.\hskip
  1em plus 0.5em minus 0.4em\relax IEEE, 2021, pp. 1194--1199.

\bibitem{salek2022usefulness}
------, ``Usefulness of adaptive strategies in asymptotic quantum channel
  discrimination,'' \emph{Physical Review A}, vol. 105, no.~2, p. 022419, 2022.

\bibitem{huang2024exact}
Z.~Huang, L.~Lami, and M.~M. Wilde, ``Exact quantum sensing limits for bosonic
  dephasing channels,'' \emph{PRX Quantum}, vol.~5, no.~2, p. 020354, 2024.

\bibitem{bandyopadhyay2014conclusive}
S.~Bandyopadhyay, R.~Jain, J.~Oppenheim, and C.~Perry, ``Conclusive exclusion
  of quantum states,'' \emph{Physical Review A}, vol.~89, no.~2, p. 022336,
  2014.

\bibitem{hsieh2023quantum}
C.-Y. Hsieh, R.~Uola, and P.~Skrzypczyk, ``Quantum complementarity: A novel
  resource for unambiguous exclusion and encryption,'' \emph{arXiv:2309.11968},
  2023.

\bibitem{mishra2024optimal}
H.~K. Mishra, M.~Nussbaum, and M.~M. Wilde, ``On the optimal error exponents
  for classical and quantum antidistinguishability,'' \emph{Letters in
  Mathematical Physics}, vol. 114, no.~3, p.~76, 2024.

\bibitem{ji2024barycentric}
K.~Ji, H.~K. Mishra, M.~Mosonyi, and M.~M. Wilde, ``Barycentric bounds on the
  error exponents of quantum hypothesis exclusion,'' \emph{arXiv:2407.13728},
  2024.

\bibitem{muller2013quantum}
M.~M{\"u}ller-Lennert, F.~Dupuis, O.~Szehr, S.~Fehr, and M.~Tomamichel, ``On
  quantum r{\'e}nyi entropies: A new generalization and some properties,''
  \emph{Journal of Mathematical Physics}, vol.~54, no.~12, 2013.

\bibitem{wang2020alpha}
X.~Wang and M.~M. Wilde, ``$\alpha$-logarithmic negativity,'' \emph{Physical
  Review A}, vol. 102, no.~3, p. 032416, 2020.

\bibitem{fang2021geometric}
K.~Fang and H.~Fawzi, ``Geometric r{\'e}nyi divergence and its applications in
  quantum channel capacities,'' \emph{Communications in Mathematical Physics},
  vol. 384, no.~3, pp. 1615--1677, 2021.

\bibitem{mosonyi2021}
M.~Mosonyi and T.~Ogawa, ``Divergence radii and the strong converse exponent of
  classical-quantum channel coding with constant compositions,'' \emph{IEEE
  Transactions on Information Theory}, vol.~67, no.~3, pp. 1668--1698, 2021.

\bibitem{fawzi2017lieb}
H.~Fawzi and J.~Saunderson, ``Lieb's concavity theorem, matrix geometric means,
  and semidefinite optimization,'' \emph{Linear Algebra and its Applications},
  vol. 513, pp. 240--263, 2017.

\bibitem{beigi}
S.~Beigi, ``Sandwiched rényi divergence satisfies data processing
  inequality,'' \emph{Journal of Mathematical Physics}, vol.~54, no.~12, p.
  122202, 2013.

\bibitem{berta2022}
D.~Fields, J.~A. Bergou, and {\'A}.~Varga, ``Chain rules for quantum
  channels,'' in \emph{2022 IEEE International Symposium on Information Theory
  (ISIT)}.\hskip 1em plus 0.5em minus 0.4em\relax IEEE, 2022, pp. 2427--2432.

\bibitem{frank}
R.~L. Frank and L.~E. H., ``Monotonicity of a relative rényi entropy,''
  \emph{Journal of Mathematical Physics}, vol.~54, no.~12, p. 122201, 2013.

\bibitem{mosonyi}
M.~Mosonyi and T.~Ogawa, ``Quantum hypothesis testing and the operational
  interpretation of the quantum rényi relative entropies,''
  \emph{Communications in Mathematical Physics}, vol. 334, p. 1617, 2015.

\bibitem{keiji}
K.~Matsumoto, ``A new quantum version of f-divergence,'' in \emph{Reality and
  Measurement in Algebraic Quantum Theory}, M.~Ozawa, J.~Butterfield,
  H.~Halvorson, M.~R{\'e}dei, Y.~Kitajima, and F.~Buscemi, Eds.\hskip 1em plus
  0.5em minus 0.4em\relax Singapore: Springer Singapore, 2018, pp. 229--273.

\end{thebibliography}


\section{Appendix}
\subsection{Properties of Quantum Divergence}\label{app1}
In this subsection, we recall the properties of sandwiched Renyi-$\alpha$ divergence and geometric Renyi divergence. 

\begin{lemma}\label{l0}
	For all $\alpha\in (1,+\infty)$, the extended sandwiched Renyi divergence has the following properties,
	\begin{itemize}
		\item[1.][Data-processing inequality \cite{beigi}] Let $\mathcal{N}_{A\rightarrow B}$ be a positive and trace-non-increasing map. Let $\delta$ and $\sigma$ be Hermitian and positive semidefinite, respectively. 
		\begin{align}
		\tilde{D}_{\alpha}(\mathcal{N}(\delta)||\mathcal{N}(\sigma))\le \tilde{D}_{\alpha}(\delta||\sigma).
		\end{align}
		\item[2.][monotonicity in $\alpha$ \cite{frank}] Assume $1\le \alpha\le\beta\le \infty$, let $\delta$ and $\sigma$ be Hermitian and positive semidefinite, respectively.
		\begin{align*}
		\tilde{D}_{\alpha}(\delta||\sigma)\le D_{\beta}(\delta||\sigma),
		\end{align*}
		when $\alpha\rightarrow 1$, $\tilde{D}_{\alpha}(\rho||\sigma)\rightarrow D(\rho||\sigma)=\tr[\rho(\log\rho-\log\sigma)].$
		\item[3.] [Joint quasi-convexity \cite{mosonyi}] Let $\{\rho_x|x=1,2,\cdots,k\}$ be a set of positive semidefinite operators, and let $\{\sigma_x|x=1,2,\cdots,k\}$ be a set of Hermitian operators. Then
		\begin{align}
		\tilde{\boldsymbol{D}}_{\alpha}(\sum_x p_x\rho_x||\sum_xp_x\sigma_x)\le \max_x\tilde{\boldsymbol{D}}_{\alpha}(\rho_x||\sigma_x).
		\end{align}
	\end{itemize}
\end{lemma}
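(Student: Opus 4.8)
The plan is to treat the data-processing inequality (item~1) as the master statement, derive the joint quasi-convexity (item~3) from it, and establish the $\alpha$-monotonicity together with the $\alpha\to1$ limit (item~2) by a separate argument. This ordering reflects the fact that items 1 and 3 are not independent: quasi-convexity is a direct corollary of data processing applied to a dephasing/partial-trace channel, so the genuine analytic content is concentrated in item~1.

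For item~1 I would first reduce the class of maps. Any positive trace-non-increasing map can be completed to a trace-preserving one, and since $\tilde{D}_{\alpha}$ only decreases under such a restriction it suffices to treat trace-preserving maps; for completely positive trace-preserving $\mathcal{N}$ one invokes the Stinespring dilation $\mathcal{N}(\cdot)=\tr_E[V(\cdot)V^\dagger]$ with $V$ an isometry. The isometric part leaves $\tilde{D}_{\alpha}$ invariant by unitary/isometric covariance of the underlying Schatten quantity, so the whole content is contractivity under a partial trace. The key analytic step, following \cite{beigi}, is to write $\|\sigma^{(1-\alpha)/2\alpha}\rho\,\sigma^{(1-\alpha)/2\alpha}\|_{\alpha}$ as the norm of an operator-valued analytic function and apply a non-commutative Stein--Hirschman three-lines interpolation between the endpoint $\alpha=1$ (where trace preservation delivers the bound) and a reference point, together with the Araki--Lieb--Thirring inequality to control the rearrangement of the non-commuting factors $\sigma^{(1-\alpha)/2\alpha}$ and $\rho$. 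The passage from completely positive to merely positive maps requires the additional observation, as in the relative-entropy case, that positivity already suffices for the relevant operator inequalities.

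Granting item~1, item~3 follows cleanly. Form the block-diagonal operators $\hat\rho=\bigoplus_x p_x\rho_x$ and $\hat\sigma=\bigoplus_x p_x\sigma_x$ on $\mathcal{H}\otimes\mathbb{C}^k$, i.e.\ attach a classical register labelling $x$. Writing $\tilde{Q}_{\alpha}(\rho||\sigma)=\exp[(\alpha-1)\tilde{D}_{\alpha}(\rho||\sigma)]$, a short functional-calculus computation using that the rescaling $(\rho,\sigma)\mapsto(p\rho,p\sigma)$ multiplies $\tilde{Q}_{\alpha}$ by $p$ gives the block-additivity
\[
\tilde{Q}_{\alpha}(\hat\rho||\hat\sigma)=\sum_x p_x\,\tilde{Q}_{\alpha}(\rho_x||\sigma_x).
\]
Tracing out the register is trace-preserving and sends $\hat\rho\mapsto\sum_x p_x\rho_x$ and $\hat\sigma\mapsto\sum_x p_x\sigma_x$, so item~1 yields $\tilde{D}_{\alpha}(\sum_x p_x\rho_x||\sum_x p_x\sigma_x)\le\tilde{D}_{\alpha}(\hat\rho||\hat\sigma)$. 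Since $\alpha>1$ makes $\tfrac{1}{\alpha-1}\log(\cdot)$ increasing and $\tilde{Q}_{\alpha}(\rho_x||\sigma_x)\le\exp[(\alpha-1)\max_x\tilde{D}_{\alpha}(\rho_x||\sigma_x)]$, the weighted sum is bounded by $\exp[(\alpha-1)\max_x\tilde{D}_{\alpha}(\rho_x||\sigma_x)]$, and applying $\tfrac{1}{\alpha-1}\log$ gives exactly the claimed bound by $\max_x\tilde{D}_{\alpha}(\rho_x||\sigma_x)$.

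For item~2 the monotonicity $\tilde{D}_{\alpha}\le\tilde{D}_{\beta}$ for $1\le\alpha\le\beta$ I would obtain from convexity of the quantum log-moment-generating function $\alpha\mapsto\log\tilde{Q}_{\alpha}(\rho||\sigma)$ (the reference \cite{frank}) together with the normalization $\log\tilde{Q}_{1}=0$: for a convex $f$ with $f(1)=0$ the difference quotient $f(\alpha)/(\alpha-1)$ is non-decreasing, which is precisely monotonicity of $\tilde{D}_{\alpha}$. The $\alpha\to1$ limit is a first-order perturbation computation: L'Hôpital applied to $\tfrac{\log\tilde{Q}_{\alpha}}{\alpha-1}$ reduces to $\tfrac{d}{d\alpha}\tilde{Q}_{\alpha}\big|_{\alpha=1}$, and differentiating the matrix powers in $\sigma^{(1-\alpha)/2\alpha}\rho\,\sigma^{(1-\alpha)/2\alpha}$ recovers $\tr[\rho(\log\rho-\log\sigma)]$. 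The hard part throughout is item~1 --- establishing the interpolation/operator-convexity inequality underlying contractivity under partial trace and checking it survives for positive rather than completely positive maps; items 2 and 3 are comparatively routine once that machinery and the block-diagonal reduction are in hand.
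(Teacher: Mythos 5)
First, a framing point: the paper contains no proof of Lemma~\ref{l0} at all --- the appendix explicitly \emph{recalls} these properties and outsources them to the cited references --- so your proposal is being measured against the standard literature route rather than an in-paper argument. Much of your plan does match that route, and your derivation of item~3 from item~1 is correct as written: the scaling identity $\tilde{Q}_{\alpha}(p\rho||p\sigma)=p\,\tilde{Q}_{\alpha}(\rho||\sigma)$, block additivity of $\tilde{Q}_{\alpha}$ on $\hat{\rho}=\bigoplus_x p_x\rho_x$ versus $\hat{\sigma}=\bigoplus_x p_x\sigma_x$, and data processing under tracing out the classical register all check out; moreover, since the partial trace is completely positive, item~3 does not depend on the delicate positive-map generality of item~1.

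There are, however, two genuine gaps. (i) Item~1 is stated for \emph{positive}, not completely positive, trace-non-increasing maps, and your architecture rests on the Stinespring dilation $\mathcal{N}(\cdot)=\tr_E[V(\cdot)V^{\dagger}]$, which exists only for CP maps. Your closing sentence that ``positivity already suffices'' is not an observation one can append to this proof: for merely positive maps the dilation route is simply unavailable, and one must run a structurally different argument (in the style of M\"uller-Hermes and Reeb), so as written your plan establishes only the CP case. Relatedly, the interpolation/Araki--Lieb--Thirring step typically factorizes the first argument as $\rho^{1/2}(\cdot)\rho^{1/2}$, which presupposes $\rho\ge 0$; the \emph{extended} divergence of the lemma has a Hermitian first argument, and that extension needs its own step rather than being absorbed silently into the three-lines argument. (ii) Your proof of item~2 hinges on the normalization $\log\tilde{Q}_{1}=0$. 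But $\tilde{Q}_{1}(\delta||\sigma)=\tr|\delta|$, and in the extended Hermitian setting one only controls $\tr\delta$, so $\log\tilde{Q}_{1}(\delta||\sigma)=\ln\tr|\delta|>0$ whenever $\delta$ is not positive semidefinite; the slope argument then no longer yields monotonicity, and in fact the statement itself fails there: for $\delta=\mathrm{diag}(2,-1)$ (Hermitian, $\tr\delta=1$) and $\sigma=\mathbb{I}/2$ one computes $\tilde{D}_{\alpha}(\delta||\sigma)=\ln 2+\frac{\ln(2^{\alpha}+1)}{\alpha-1}$, which diverges as $\alpha\to 1^{+}$ and is decreasing in $\alpha$, and the claimed limit $\tr[\rho(\log\rho-\log\sigma)]$ is not even defined for such $\delta$. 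So your item~2 argument is correct exactly on the domain where the convexity-plus-anchor trick applies, namely normalized PSD $\rho$, and the restriction should be stated explicitly --- all the more because the paper later invokes this monotonicity with $\tau\in\tilde{\boldsymbol{D}}_A$, i.e.\ Hermitian trace-one operators, which is precisely the regime where it can fail.
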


\begin{lemma}\label{le1}
	For all $\alpha\in (1,2]$, $\widehat{D}_{\alpha}(\cdot||\cdot)$ satisfies the following properties,
	\begin{itemize}
		\item[1.][Data-processing inequality \cite{fang2021geometric}] Let $\mathcal{N}_{A\rightarrow B}$ be a positive and trace-non-increasing map. Let $\delta$ and $\sigma$ be Hermitian and positive semidefinite, respectively. The following inequality holds for all $\alpha\in(1,2],$
		\begin{align}
		\widehat{D}_{\alpha}(\mathcal{N}(\delta)||\mathcal{N}(\sigma))\le \widehat{D}_{\alpha}(\delta||\sigma).
		\end{align}
		\item[2.] [\cite{keiji,fang2021geometric}] Assume $\rho$ and $\sigma$ be quantum states, let $\alpha\in(1,2]$, then
		\begin{align}
		D(\rho||\sigma)\le \tilde{D}_{\alpha}(\rho||\sigma)\le \widehat{D}_{\alpha}(\rho||\sigma),
		\end{align}
		here $D(\rho||\sigma)=\tr\rho\log\rho-\tr\rho\log\sigma$.
		
		\item[3.]	[monotonicity in $\alpha$ \cite{fang2021geometric}] Assume $1\le \alpha\le\beta\le \infty$, let $\delta$ and $\sigma$ be Hermitian and positive semidefinite, respectively.
		\begin{align*}
		\tilde{D}_{\alpha}(\delta||\sigma)\le D_{\beta}(\delta||\sigma).
		\end{align*}
		
		\item[4.] [addivity \cite{mishra2024optimal}] When $\alpha\in(1,2],$ $\rho_1$, $\rho_2$, $\sigma_1$ and $\sigma_2$ are quantum states,
		\begin{align}
		\widehat{D}_{\alpha}(\rho_1\otimes\rho_2||\sigma_1\otimes\sigma_2)=\widehat{D}_{\alpha}(\rho_1||\sigma_1)+\widehat{D}_{\alpha}(\rho_2||\sigma_2).
		\end{align}
		\item[5.] [Direct-sum property \cite{mishra2024optimal}] Assume $\rho_{XA}$ and $\sigma_{XA}$ are two $c-q$ states,
		\begin{align*}
		\rho_{XA}=\sum_xp_x\ket{x}\bra{x}\otimes\rho_A^x,\\
		\sigma_{XA}=\sum_xp_x\ket{x}\bra{x}\otimes\sigma_A^x,
		\end{align*}
		let $\widehat{Q}_{\alpha}(\rho||\sigma)=\exp[(\alpha-1)\widehat{D}_{\alpha}(\rho||\sigma)],$	then
		\begin{align}
		\widehat{Q}_{\alpha}(\rho_{XA}||\sigma_{XA})=\sum_xp(x)^{\alpha}q(x)^{1-\alpha}\widehat{Q}_{\alpha}(\rho_A^x||\sigma_A^x).
		\end{align}
		\item[6.] [Joint convexity \cite{wilde}] Let $\rho$ and $\sigma$ can be written as $\rho=\sum_x p_x\varphi_x$ and $\sigma=\sum_x p_x\varsigma_x$, respectively, here $p_x\ge 0$ and $\sum_xp_x=1,$
		\begin{align}
		\widehat{D}_{\alpha}(\sum_xp_x\rho_A^x||\sum_xp(x)\sigma_A^x)\le\sum_x p_x\widehat{D}_{\alpha}(\rho_A^x||\sigma_A^x),\label{r2}
		\end{align}
		moreover, (\ref{r2}) can be relaxed to 
		\begin{align}
		\widehat{D}_{\alpha}(\sum_xp_x\rho_A^x||\sum_xp(x)\sigma_A^x)\le \max_x\widehat{D}_{\alpha}(\rho_A^x||\sigma_A^x).
		\end{align}
	\end{itemize}
\end{lemma}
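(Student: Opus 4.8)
The plan is to treat the six items as one package built around the single auxiliary quantity $\widehat{Q}_{\alpha}(\rho||\sigma)=\exp[(\alpha-1)\widehat{D}_{\alpha}(\rho||\sigma)]=\tr[\sigma(\sigma^{-\frac12}\rho\sigma^{-\frac12})^{\alpha}]$, dispatching the structural identities by direct computation and importing the analytic inequalities from the maximal $f$-divergence framework. First I would settle additivity (item 4) and the direct-sum property (item 5). For a product one has $(\sigma_1\ox\sigma_2)^{-\frac12}(\rho_1\ox\rho_2)(\sigma_1\ox\sigma_2)^{-\frac12}=(\sigma_1^{-\frac12}\rho_1\sigma_1^{-\frac12})\ox(\sigma_2^{-\frac12}\rho_2\sigma_2^{-\frac12})$, so the $\alpha$-th power factorizes and the trace splits, giving $\widehat{Q}_{\alpha}(\rho_1\ox\rho_2||\sigma_1\ox\sigma_2)=\widehat{Q}_{\alpha}(\rho_1||\sigma_1)\,\widehat{Q}_{\alpha}(\rho_2||\sigma_2)$; applying $\frac{1}{\alpha-1}\ln(\cdot)$ turns the product into the claimed sum. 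For item 5 the block-diagonal form of the $c$-$q$ operators makes $\sigma_{XA}^{-\frac12}\rho_{XA}\sigma_{XA}^{-\frac12}$ block diagonal, the $\alpha$-power acts blockwise, and the outer trace decomposes over $x$, producing $\widehat{Q}_{\alpha}(\rho_{XA}||\sigma_{XA})=\sum_x p(x)^{\alpha}q(x)^{1-\alpha}\widehat{Q}_{\alpha}(\rho_A^x||\sigma_A^x)$, where $p(x)$ and $q(x)$ are the two classical weight distributions.

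Next I would establish the data-processing inequality (item 1) by recognizing $\widehat{Q}_{\alpha}$ as the \emph{maximal} $f$-divergence associated with $f(x)=x^{\alpha}$, namely $\widehat{Q}_{\alpha}(\rho||\sigma)=\tr[\sigma f(\sigma^{-\frac12}\rho\sigma^{-\frac12})]$. Since $x\mapsto x^{\alpha}$ is operator convex on $[0,\infty)$ exactly for $\alpha\in[1,2]$, the maximal $f$-divergence is monotone under positive trace-non-increasing maps, i.e. $\widehat{Q}_{\alpha}(\mathcal{N}(\delta)||\mathcal{N}(\sigma))\le\widehat{Q}_{\alpha}(\delta||\sigma)$; composing with the increasing map $t\mapsto\frac{1}{\alpha-1}\ln t$, whose direction is preserved because $\alpha>1$, transfers this to $\widehat{D}_{\alpha}$ as cited from \cite{fang2021geometric}. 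Item 2 then splits: $D\le\tilde{D}_{\alpha}$ is the $\alpha\to 1$ limit together with monotonicity of the sandwiched divergence in $\alpha$ (Lemma \ref{l0}.2), while $\tilde{D}_{\alpha}\le\widehat{D}_{\alpha}$ expresses that the geometric (Belavkin--Staszewski) branch is maximal among the DPI-compatible Rényi families, which I would derive from an Araki--Lieb--Thirring trace inequality comparing $\sigma^{\frac{1-\alpha}{2\alpha}}\rho\,\sigma^{\frac{1-\alpha}{2\alpha}}$ with the operator geometric mean, as in \cite{fang2021geometric}. Monotonicity in $\alpha$ (item 3) I would reduce to a one-variable fact: diagonalizing $X=\sigma^{-\frac12}\rho\sigma^{-\frac12}=\sum_i\lambda_iP_i$ and setting $w_i=\tr[\sigma P_i]\ge0$ gives $\widehat{Q}_{\alpha}=\sum_i w_i\lambda_i^{\alpha}$, so that $\widehat{D}_{\alpha}=\frac{1}{\alpha-1}\ln\sum_i w_i\lambda_i^{\alpha}$ is a classical Rényi-type quantity whose monotonicity in $\alpha$ is the standard Hölder/Jensen inequality.

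Finally, joint convexity (item 6). Its relaxed form is the quick part: I would form the dephased $c$-$q$ extensions $\widehat{\rho}_{XA}=\sum_x p_x\proj{x}\ox\varphi_x$ and $\widehat{\sigma}_{XA}=\sum_x p_x\proj{x}\ox\varsigma_x$, apply the partial-trace channel $\tr_X$ with DPI (item 1), then invoke item 5 with $p(x)=q(x)=p_x$ to obtain $\widehat{D}_{\alpha}(\sum_x p_x\varphi_x||\sum_x p_x\varsigma_x)\le\frac{1}{\alpha-1}\ln\sum_x p_x\widehat{Q}_{\alpha}(\varphi_x||\varsigma_x)\le\max_x\widehat{D}_{\alpha}(\varphi_x||\varsigma_x)$, where the last step uses $\sum_x p_x a_x\le\max_x a_x$ and monotonicity of the logarithm. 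I expect the genuine convex bound (\ref{r2}) to be the main obstacle: the DPI route halts at the log-sum-exp expression $\frac{1}{\alpha-1}\ln\sum_x p_x\widehat{Q}_{\alpha}$, which lies between the weighted average and the maximum and therefore cannot by itself yield $\le\sum_x p_x\widehat{D}_{\alpha}(\varphi_x||\varsigma_x)$, since the concave logarithm pushes in the wrong direction. Closing (\ref{r2}) thus needs the sharper joint-convexity argument for the geometric divergence itself, and I would import it directly from \cite{wilde} rather than re-deriving it here, using the elementary dephasing computation above only to certify the relaxed maximum bound.
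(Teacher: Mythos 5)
The paper supplies no proof of Lemma \ref{le1} at all: it is a catalogue of known facts imported by citation from \cite{fang2021geometric,keiji,mishra2024optimal,wilde}, so your self-contained sketch does strictly more than the source, and most of it is correct. The tensor factorization for item 4, the blockwise computation for item 5 (where you also silently repair the paper's typo: both $c$-$q$ states are written with weights $p_x$, yet the formula involves $q(x)$, so two distinct distributions are intended, as you assumed), the maximal-$f$-divergence reading of item 1, and the dephase-then-partial-trace route to the max bound in item 6 are all sound. One economy you missed: your spectral setup for item 3 already proves the harder half of item 2, making the Araki--Lieb--Thirring detour unnecessary and in fact misdirected (ALT compares the sandwiched with the Petz divergence, not with the geometric one). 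Writing $\sigma^{-1/2}\rho\sigma^{-1/2}=\sum_i\lambda_i P_i$, $q_i=\tr[\sigma P_i]$, $p_i=\lambda_i q_i$, the map $\Lambda(\cdot)=\sum_i \bra{i}\cdot\ket{i}\,\sigma^{1/2}P_i\sigma^{1/2}/q_i$ is CPTP with $\Lambda(p)=\rho$ and $\Lambda(q)=\sigma$; data processing for $\tilde{D}_{\alpha}$ under this preparation channel gives $\tilde{D}_{\alpha}(\rho||\sigma)\le D_{\alpha}(p||q)=\widehat{D}_{\alpha}(\rho||\sigma)$ in two lines, and the same representation $\widehat{D}_{\alpha}(\rho||\sigma)=D_{\alpha}(p||q)$ is exactly your classical reduction for item 3.

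The one place where you sensed trouble, inequality (\ref{r2}), is worse than an obstacle to your method: it is false as stated, so your plan to ``import it directly from \cite{wilde}'' cannot be executed. On commuting pairs $\widehat{D}_{\alpha}$ reduces to the classical R\'enyi divergence, which for $\alpha>1$ is jointly quasi-convex but \emph{not} jointly convex (van Erven--Harrem\"oes); classical counterexamples therefore embed, and no argument can close (\ref{r2}) for $\alpha\in(1,2]$. What the literature actually proves is joint convexity of the exponential $\widehat{Q}_{\alpha}$, i.e.\ of $(\rho,\sigma)\mapsto\tr[\sigma(\sigma^{-1/2}\rho\sigma^{-1/2})^{\alpha}]$, via operator convexity of $x^{\alpha}$ on $[1,2]$; after the logarithm this yields only the max bound --- precisely what your dephasing argument delivers independently, and the only form of item 6 the main text ever uses (e.g.\ in (\ref{rd}) and in the channel bounds). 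So the correct repair is to replace (\ref{r2}) by convexity of $\widehat{Q}_{\alpha}$ and keep your quasi-convexity derivation; with that amendment, and the reverse-test shortcut above, your route is a complete and more informative substitute for the paper's bare citations.
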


\begin{lemma}\cite{fang2021geometric,mishra2024optimal}
	Assume $\mathcal{N}$ is a channel and $\mathcal{M}$ is completely positive. When $\alpha\in(1,2]$,
	\begin{itemize}
		\item when $1< \alpha\le \beta\le 2$, 
		\begin{align}
		\widehat{D}_{\alpha}(\mathcal{N}||\mathcal{M})\le \widehat{D}_{\beta}(\mathcal{N}||\mathcal{M}).
		\end{align}
		\item let $\{\mathcal{N}_x\}$ is a set of quantum channels, and let $\{\mathcal{M}_x\}$ is a set of completely positive maps,
		\begin{align}
		\widehat{D}_{\alpha}(\sum_xp_x\mathcal{N}_x||\sum_xp_x\mathcal{M}_x)\le \max_x\widehat{D}_{\alpha}(\mathcal{N}_x||\mathcal{M}_x),
		\end{align}
		here $p_x>0$ and $\sum_xp_x=1.$
		\item when $\alpha\rightarrow 1$, the geometric Renyi channel divergence turns into the Belavkin-Staszewski channel divergence,
		\begin{align}
		\widehat{D}(\mathcal{N}||\mathcal{M})=&	\lim\limits_{\alpha\rightarrow 1}\widehat{D}_{\alpha}(\mathcal{N}||\mathcal{M})\\
		=&\begin{cases}
		||\tr_B[\mathcal{J}_N^{\frac{1}{2}}\ln(\mathcal{J}^{\frac{1}{2}}_{\mathcal{N}}\mathcal{J}^{-1}_{\mathcal{M}}\mathcal{J}^{\frac{1}{2}}_{\mathcal{N}})\mathcal{J}^{\frac{1}{2}}_{\mathcal{N}}]||_{\infty}\hspace{5mm} \textit{if $supp(\mathcal{J}_{\mathcal{N}}) \subseteq supp(\mathcal{J}_{\mathcal{M}})$,}\\
		+\infty \hspace{5mm} \textit{otherwise}.
		\end{cases}
		\end{align}
		\item  when $\alpha\in(1,2],$
		\begin{align}
		\widehat{D}_{\alpha}(\mathcal{N}(\rho_{RA})||\mathcal{M}(\sigma)_{RA})\le\widehat{D}_{\alpha}(\mathcal{N}||\mathcal{M})+\widehat{D}_{\alpha}(\rho_{RA}||\sigma_{RA}),
		\end{align}
		here $\mathcal{N}\in \boldsymbol{C}_{A\rightarrow B}$, $\mathcal{M}\in \boldsymbol{CP}_{A\rightarrow B}$, $\rho\in \boldsymbol{D}_{RA}$, and $\sigma_{RA}\in\boldsymbol{CP}_{RA}.$
	\end{itemize} 
\end{lemma}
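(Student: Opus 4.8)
All four properties descend from the state-level properties of $\widehat{D}_{\alpha}(\cdot||\cdot)$ collected in Lemma~\ref{le1}, combined with the defining supremum $\widehat{D}_{\alpha}(\mathcal{N}||\mathcal{M})=\sup_{\rho\in\boldsymbol{D}_{RA}}\widehat{D}_{\alpha}(\mathcal{N}(\rho)||\mathcal{M}(\rho))$. The plan is to dispatch the first three bullets by lifting the corresponding state inequality through this supremum, and to treat the chain rule separately as the genuinely hard case. For the monotonicity in $\alpha$ (first bullet), I would start from the state-level monotonicity $\widehat{D}_{\alpha}(\delta||\sigma)\le\widehat{D}_{\beta}(\delta||\sigma)$ for $1<\alpha\le\beta\le 2$ (Lemma~\ref{le1}.3), apply it to the pair $(\mathcal{N}(\rho),\mathcal{M}(\rho))$ for each fixed input $\rho$, and then take $\sup_{\rho}$ of both sides; since the supremum is order-preserving, $\widehat{D}_{\alpha}(\mathcal{N}||\mathcal{M})\le\widehat{D}_{\beta}(\mathcal{N}||\mathcal{M})$ follows at once. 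For the quasi-convexity (second bullet), I would fix an input $\rho$, use the linearity $(\sum_x p_x\mathcal{N}_x)(\rho)=\sum_x p_x\mathcal{N}_x(\rho)$ and the analogous identity for $\mathcal{M}$, and invoke the relaxed state-level joint quasi-convexity (Lemma~\ref{le1}.6) to get
\begin{align*}
\widehat{D}_{\alpha}(\sum_x p_x\mathcal{N}_x(\rho)||\sum_x p_x\mathcal{M}_x(\rho))\le\max_x\widehat{D}_{\alpha}(\mathcal{N}_x(\rho)||\mathcal{M}_x(\rho))\le\max_x\widehat{D}_{\alpha}(\mathcal{N}_x||\mathcal{M}_x),
\end{align*}
where the last step bounds each state divergence by its channel supremum; taking $\sup_{\rho}$ on the left then yields the claim.

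For the $\alpha\to 1$ limit (third bullet), the plan is to combine the state-level limit $\lim_{\alpha\to 1}\widehat{D}_{\alpha}(\rho||\sigma)=\overline{D}(\rho||\sigma)$ from (\ref{bs}) with the monotonicity just established. Because $\alpha\mapsto\widehat{D}_{\alpha}(\mathcal{N}(\rho)||\mathcal{M}(\rho))$ decreases as $\alpha\downarrow 1$ and $\rho$ ranges over a compact set, a Dini-type monotone-convergence argument licenses exchanging the limit with the supremum over $\rho$, giving $\lim_{\alpha\to 1}\widehat{D}_{\alpha}(\mathcal{N}||\mathcal{M})=\sup_{\rho}\overline{D}(\mathcal{N}(\rho)||\mathcal{M}(\rho))$. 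The explicit Choi form is then obtained by passing to the limit directly inside the closed expression for $\widehat{D}_{\alpha}(\mathcal{N}||\mathcal{M})$: expanding $(\mathcal{J}_{\mathcal{M}}^{-1/2}\mathcal{J}_{\mathcal{N}}\mathcal{J}_{\mathcal{M}}^{-1/2})^{\alpha}$ around $\alpha=1$ and using $\frac{d}{d\alpha}X^{\alpha}|_{\alpha=1}=X\ln X$ converts the power into the logarithm $\ln(\mathcal{J}_{\mathcal{N}}^{1/2}\mathcal{J}_{\mathcal{M}}^{-1}\mathcal{J}_{\mathcal{N}}^{1/2})$, producing the stated operator-norm expression; one must also check that the support condition $supp(\mathcal{J}_{\mathcal{N}})\subseteq supp(\mathcal{J}_{\mathcal{M}})$ is precisely the finiteness condition inherited from the state case.

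The chain rule (fourth bullet) is the main obstacle, and I do not expect it to follow from a short supremum argument. The easy half is the special case $\sigma_{RA}=\rho_{RA}$: then $\widehat{D}_{\alpha}(\rho||\rho)=0$, the right-hand side reduces to $\widehat{D}_{\alpha}(\mathcal{N}||\mathcal{M})$, and the left-hand side is bounded by $\sup_{\rho}\widehat{D}_{\alpha}(\mathcal{N}(\rho)||\mathcal{M}(\rho))=\widehat{D}_{\alpha}(\mathcal{N}||\mathcal{M})$ directly from the definition. The difficulty is the asymmetric case $\sigma\ne\rho$, where the cross terms between the channel-level and state-level divergences must be controlled. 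Here I would follow the semidefinite-programming representation of $\widehat{D}_{\alpha}$ developed in \cite{fang2021geometric,berta2022}: the crucial structural fact is that the geometric Rényi divergence is a maximal divergence and hence admits a reverse-test (factorization) decomposition that composes under the action of a channel. Concretely, one writes $\widehat{Q}_{\alpha}(\rho||\sigma)=\tr[\sigma(\sigma^{-1/2}\rho\sigma^{-1/2})^{\alpha}]$ through the operator-concave weighted geometric mean, combines the factorizations for the channel pair and the input pair using additivity under tensor products (Lemma~\ref{le1}.4), and applies the operator inequalities for $t\mapsto t^{\alpha-1}$ (the Lieb-type bounds of \cite{fawzi2017lieb}) to collapse the amortized expression. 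This amortization collapse is exactly where the geometric Rényi divergence outperforms the sandwiched variant, and verifying the relevant operator inequality is the hardest step of the whole lemma.
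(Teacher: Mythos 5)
You should note first that the paper contains no proof of this lemma: it is imported wholesale by citation from \cite{fang2021geometric,mishra2024optimal}, and nothing in the appendix argues any of the four bullets. So your attempt can only be measured against the proofs in the cited references. Against that benchmark, your handling of the first three bullets is essentially correct. Monotonicity in $\alpha$ and the mixing bound do lift pointwise through the supremum $\widehat{D}_{\alpha}(\mathcal{N}||\mathcal{M})=\sup_{\rho\in\boldsymbol{D}_{RA}}\widehat{D}_{\alpha}(\mathcal{N}(\rho_{RA})||\mathcal{M}(\rho_{RA}))$, using the state-level facts of Lemma \ref{le1} together with the linearity $(\sum_x p_x\mathcal{N}_x)(\rho_{RA})=\sum_x p_x\mathcal{N}_x(\rho_{RA})$, exactly as you describe. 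For the $\alpha\to 1$ limit, your second route --- expanding the closed Choi-operator formula and using $\frac{d}{d\alpha}X^{\alpha}\big|_{\alpha=1}=X\ln X$ --- is the one actually used in \cite{fang2021geometric}; with the unnormalized Choi convention $\tr_B\mathcal{J}_{\mathcal{N}}=\mathbb{I}_A$ used there, the argument of $\frac{1}{\alpha-1}\ln\lVert\cdot\rVert_{\infty}$ equals $1$ at $\alpha=1$, so the limit is a genuine $0/0$ resolved by differentiation. Your Dini-type exchange of $\lim_{\alpha\downarrow 1}$ with $\sup_{\rho}$ is dispensable for this and would anyway need care where the support condition fails and the functions take the value $+\infty$; working only with the closed form is cleaner. (Be aware that the paper's own normalization $J_{\mathcal{N}}=\frac{1}{d}\sum_{i,j}\ketbra{i}{j}\otimes\mathcal{N}(\ketbra{i}{j})$ differs from the convention the stated limit formula silently assumes.)

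The genuine gap is in your fourth bullet. You rightly single out the chain rule as the hard step, and your "easy half" $\sigma_{RA}=\rho_{RA}$ observation is fine, but the mechanism you propose would not close. Additivity under tensor products (Lemma \ref{le1}.4) plus the operator inequalities of \cite{fawzi2017lieb} do not produce the chain rule: \cite{fawzi2017lieb} enters this circle of results only to give semidefinite representations of the weighted matrix geometric means $G_{\alpha}(A,B)=A^{1/2}(A^{-1/2}BA^{-1/2})^{\alpha}A^{1/2}$ at rational $\alpha$, i.e., it concerns computability of $\widehat{D}_{\alpha}$ (as in the paper's SDP remarks), not amortization. The actual engine in \cite{fang2021geometric} (and in \cite{berta2022} for the Belavkin--Staszewski limit) is different: one writes $(\mathcal{I}_R\otimes\mathcal{N})(\rho_{RA})$ as a compression $K^{\dagger}\mathcal{J}_{\mathcal{N}}K$ of the Choi operator by a $\rho$-dependent operator $K$, applies the transformer-type inequality comparing $K^{\dagger}G_{\alpha}(A,B)K$ with $G_{\alpha}(K^{\dagger}AK,K^{\dagger}BK)$ --- which for $\alpha\in(1,2]$ holds in the direction required --- and the state term $\widehat{D}_{\alpha}(\rho_{RA}||\sigma_{RA})$ then factors out multiplicatively at the level of $\widehat{Q}_{\alpha}$. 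Your invocation of the reverse-test/maximality characterization of $\widehat{D}_{\alpha}$ points at a legitimate alternative (Matsumoto-style composition of reverse tests), but you never supply the composition step: the phrase "collapse the amortized expression" names exactly the content of the theorem rather than proving it. Finally, parse the statement as $\widehat{D}_{\alpha}((\mathcal{I}_R\otimes\mathcal{N})(\rho_{RA})||(\mathcal{I}_R\otimes\mathcal{M})(\sigma_{RA}))$ with $\sigma_{RA}$ positive semidefinite; the displayed $\mathcal{M}(\sigma)_{RA}$ and the condition $\sigma_{RA}\in\boldsymbol{CP}_{RA}$ are typos in the paper, not constraints your proof needs to honor.
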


\begin{lemma}[\cite{mishra2024optimal}]\label{l4}
	Let $\rho$ be a Hermitian operator of the system $\mathcal{H}_A$ with $\tr\rho=1$, and let $\sigma$ be a state of system $\mathcal{H}_A$. Then 
	\begin{align}
	\tilde{D}_H^{\epsilon}(\rho||\sigma)\le	\tilde{D}_{\alpha}(\rho||\sigma)-\frac{\alpha}{\alpha-1}\ln(1-\epsilon).
	\end{align}
	Here $\epsilon\in(0,1).$
\end{lemma}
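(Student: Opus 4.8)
The plan is to reduce the quantum inequality to a classical two-point Rényi comparison by processing both operators through the binary measurement attached to a feasible test, and then to invoke the data-processing inequality for $\tilde{D}_{\alpha}$ (Lemma \ref{l0}.1). First I would dispose of the degenerate case: if $supp(\rho)\not\subseteq supp(\sigma)$ then $\tilde{D}_{\alpha}(\rho||\sigma)=+\infty$, and since $\frac{\alpha}{\alpha-1}>0$ while $\ln(1-\epsilon)<0$ the right-hand side is $+\infty$, so the claim holds trivially. Hence I may assume $supp(\rho)\subseteq supp(\sigma)$ for the remainder.

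For the main case, fix any $Q$ feasible for $\gamma_{\epsilon}(\rho||\sigma)$, i.e. $0\le Q\le\mathbb{I}$ and $\tr(Q\rho)\le\epsilon$, and set $p=\tr(Q\rho)$, $q=\tr(Q\sigma)$. I would introduce the binary measurement channel $\mathcal{M}(\cdot)=\tr(Q\cdot)\ketbra{0}{0}+\tr((\mathbb{I}-Q)\cdot)\ketbra{1}{1}$, which is positive and trace preserving, so Lemma \ref{l0}.1 applies with first argument $\rho$ (Hermitian) and second argument $\sigma$ (a state). Because $\mathcal{M}$ produces commuting diagonal outputs, $\tilde{D}_{\alpha}(\mathcal{M}(\rho)||\mathcal{M}(\sigma))$ collapses to the classical expression $\frac{1}{\alpha-1}\ln\big(q^{1-\alpha}|p|^{\alpha}+(1-q)^{1-\alpha}(1-p)^{\alpha}\big)$, and data processing then gives $(\alpha-1)\tilde{D}_{\alpha}(\rho||\sigma)\ge\ln\big(q^{1-\alpha}|p|^{\alpha}+(1-q)^{1-\alpha}(1-p)^{\alpha}\big)$.

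From here the argument is pure rearrangement. I would discard the nonnegative term $q^{1-\alpha}|p|^{\alpha}$ to obtain $(\alpha-1)\tilde{D}_{\alpha}(\rho||\sigma)\ge (1-\alpha)\ln(1-q)+\alpha\ln(1-p)$, then divide by $\alpha-1>0$ to reach $-\ln(1-q)\le \tilde{D}_{\alpha}(\rho||\sigma)-\frac{\alpha}{\alpha-1}\ln(1-p)$. Using $p\le\epsilon$, hence $1-p\ge 1-\epsilon>0$ and $-\frac{\alpha}{\alpha-1}\ln(1-p)\le -\frac{\alpha}{\alpha-1}\ln(1-\epsilon)$, yields $-\ln(1-\tr(Q\sigma))\le \tilde{D}_{\alpha}(\rho||\sigma)-\frac{\alpha}{\alpha-1}\ln(1-\epsilon)$ for every feasible $Q$. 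Taking the supremum over $Q$ on the left reproduces $\tilde{D}_H^{\epsilon}(\rho||\sigma)$ by definition and completes the proof.

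The one point that needs care — and which I expect to be the main obstacle in writing the clean version — is that $\rho$ is only assumed Hermitian with unit trace, not a genuine state, so $p=\tr(Q\rho)$ may be negative and $\mathcal{M}(\rho)$ is only a signed measure. This is exactly why I work with the \emph{extended} divergence and why the classical term carries $|p|^{\alpha}$ rather than $p^{\alpha}$; the saving grace is that $1-p>0$ (since $p\le\epsilon<1$) keeps the retained term well behaved, and discarding the first term remains legitimate because it is nonnegative. I would also explicitly confirm that Lemma \ref{l0}.1 is stated for a Hermitian first argument, so no positivity of $\rho$ is needed to invoke data processing, and note that boundary values $q\in\{0,1\}$ make one of the two sides infinite or zero and are handled trivially.
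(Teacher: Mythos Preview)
Your proposal is correct and follows essentially the same approach as the paper: apply the data-processing inequality for $\tilde{D}_{\alpha}$ to the binary measurement channel associated with a feasible test $Q$, drop the nonnegative term $q^{1-\alpha}|p|^{\alpha}$, rearrange, and take the supremum over $Q$. Your version is in fact slightly more careful than the paper's, since you explicitly dispose of the support-degenerate case and flag the issue that $p=\tr(Q\rho)$ need not be nonnegative when $\rho$ is merely Hermitian.
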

\begin{proof}
	Here the proof is a generalization of the relation between $D_H^{\epsilon}(\rho||\sigma)$ and $\tilde{D}_{\alpha}(\rho||\sigma)$ shown in \cite{}. Next we present the proof.
	
	Assume $Q$ is positive semidefinite with $Q\le \mathbb{I}$ and $\tr Q\rho\le\epsilon$. Let 
	\begin{align*}
	\mathcal{N}(\sigma)=\tr Q\rho\ket{1}\bra{1}+\tr (\mathbb{I}-Q)\sigma \ket{2}\bra{2}.
	\end{align*}
	Then by the Data-processing inequality \cite{beigi}, we have
	\begin{align*}
	\tilde{D}_{\alpha}(\rho||\sigma)\ge& \tilde{D}_{\alpha}(\mathcal{N}(\rho)||\mathcal{N}(\sigma))\\
	=&\frac{1}{\alpha-1}\ln[|\tr(Q\rho)\tr(Q\sigma)^{\frac{1-\alpha}{\alpha}}|^{\alpha}+|\tr((\mathbb{I}-Q)\rho)\tr((\mathbb{I}-Q)\sigma)^{\frac{1-\alpha}{\alpha}}|^{\alpha}]\\
	\ge& \frac{\alpha}{\alpha-1}\ln(\tr((\mathbb{I}-Q)\rho))-\ln(\tr((\mathbb{I}-Q)\sigma))\\
	\ge& \frac{\alpha}{\alpha-1}\ln(1-\epsilon)-\ln\tr(\mathbb{I}-Q)\sigma,
	\end{align*}
	here the second inequality is due to that $\tr(\mathbb{I}-Q)\rho\ge 1-\epsilon.$ For any $\Lambda\in \boldsymbol{PSD}_A$ satisfying $\Lambda\le \mathbb{I}$ and $\tr(\Lambda\rho)\ge 1-\epsilon$,
	\begin{align}
	\tilde{D}_H^{\epsilon}(\rho||\sigma)=&\max_Q\{-\ln(1-\tr Q\sigma)|\tr Q\rho\le \epsilon,0\le Q\le \mathbb{I} \}\\
	\le&\tilde{D}_{\alpha}(\rho||\sigma)-\frac{\alpha}{\alpha-1}\ln(1-\epsilon),
	\end{align}
	hence, we finish the proof.
\end{proof}

\subsection{Properties of $q_{re}(\{Q_k\})$}\label{app2}
\begin{lemma}
	Assume $\mathcal{E}=\{Q_x\}$ is a set of positive semidefinite operators acting on $\mathcal{H}_d$, then 
	\begin{itemize}
		\item[1]. $q_{re}(\{Q_x\})\ge  \max_x\tr Q_x$ , and the equality is attained if $Q_x=Q_y$ for each $x\ne y.$
		\item[2]. Assume $\mathcal{C}$ is an arbitrary channel, and when all $Q_x=\rho_x\in \boldsymbol{D}(\mathcal{H}),$ then $q_{re}(\{\mathcal{C}(\rho_x)\})\le q_{re}(\{\rho_x\}).$
	\end{itemize}
\end{lemma}
\begin{proof}
	\begin{itemize}
		\item[1]. Due to the defnitions of $q_{re}(\{Q_x\})$, we have $P\ge  \rho_k,$ $i.$ $e.$
		\begin{align*}
P\ge Q_x\Rightarrow& \tr P\ge \tr Q_x,\forall x\\
\Rightarrow& \tr P\ge \max_x \tr Q_x.
		\end{align*}
		
	Next we show when $Q_x=Q_y=M,$ $\forall x\ne y,$ $q_{re}(\{Q_x\})=\tr M.$	When $Q_x=Q_y=M,$ $\forall x\ne y,$ let $P$ in (\ref{re 1}) be $M$, then $q_{re}(\{Q_x\})\le  \tr M.$ Hence $q_{re}(\{\rho_x\})=\max_x \tr M.$
		\item[2].  Assume $\mathcal{C}$ is an arbitrary channel, 
		\begin{align*}
		\mathcal{C}(\rho)=\sum_k E_k^{\dagger}\rho E_k,
		\end{align*}
		here $\sum_k E_k E_k^{\dagger}=\mathbb{I}.$
		\begin{align*}
		q_{re}(\{\rho_x\})=&\min \tr P\\
		\textit{s. t.}\hspace{3mm}& P\ge \rho_x,
		\end{align*}
		let $P$ be the optimal for $\{\rho_x\}$ in terms of $q_{re}(\cdot)$, $$\mathcal{C}(P)=\sum_k E_k^{\dagger}P E_k\ge \sum_k E_k^{\dagger}\rho_x E_k,\forall x,$$
		that is, $\mathcal{C}(P)$ is appropriate for $\{\rho_x\}$ in terms of $q_{re}(\{\rho_x\})$, then 
		\begin{align*}
		q_{re}(\{\rho_x\})=&\tr (P)\\
		=&\tr\sum_k E_kE_k^{\dagger}P \\
		\ge& q_{re}(\{\mathcal{C}(\rho_x)\}),
		\end{align*}
		here the first equality is due to $\sum_k E_kE_k^{\dagger}=\mathbb{I},$ the first inequality is due to that $\mathcal{C}(P)$ is appropriate for $\{\mathcal{C}(\rho_x)\}$ in terms of $q_{re}(\cdot)$.
	\end{itemize}
\end{proof}

Next we show the duality of $q_{re}(\{Q_k\})$,

\begin{align}
\max &\sum_i Y_iQ_i\label{re 2}\\
\textit{s.t.}&\hspace{3mm} \sum_i Y_i\le \mathbb{I},\nonumber\\
&\hspace{3mm}Y_i\ge 0\nonumber
\end{align}
\begin{proof}
	Consider the Lagrangian
	\begin{align*}
	&\mathcal{L}(P,\{Q_x\}_x,R)\\
	=&\tr(P)+\sum_x\tr((Q_x-P)M_x)\\
	=&\sum_x\tr(Q_xY_x)+\tr(P(I-\sum_xY_x)),
	\end{align*}
	then the dual SDP is 
	\begin{align*}
	q_{re}(\{Q_k\})=&\max_{} \sum_k \tr(Q_kY_k)\\
	\textit{s. t.}&\hspace{3mm} \sum_k Y_k\le \mathbb{I},\\
	&\hspace{3mm} Y_k \ge 0.
	\end{align*}
	At last, we show strong duality holds for the SDP (\ref{re 1}). Note that $P=\sum_i\rho_i$ is a feasible solution to the primal program. For the dual program, when $Y_1=\frac{1}{2}\ket{0}\bra{0},$ $Y_i=0,$ $\forall i\ne 1,$ then $\mathbb{I}-\sum_i Y_i=\mathbb{I}-\frac{1}{2}\ket{0}\bra{0}$. Let $\ket{v}$ be an arbitrary state, $\bra{v}(\mathbb{I}-\sum_i Y_i)\ket{v}>0$.  Hence  the Slater's condition are satisfied, we finish the proof.
\end{proof}

Next we present the following expression of $q_{re}(\cdot)$.
\begin{lemma}\label{l2}
	There exists $\eta_{*}<1$ such that $\forall\eta\in[\eta_{*},1]$,
	\begin{align*}
	(1-\eta)q_{re}(\{Q_x\})=&\max  \sum_x\tr(Y_xQ_x)\\
	\textit{s. t.}\hspace{3mm}& \sum_x Y_x\le (1-\eta)\mathbb{I},\\
	& Y_x\ge 0,\forall x
	\end{align*}
\end{lemma}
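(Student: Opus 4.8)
The plan is to read the right-hand side as nothing more than the dual program for $q_{re}(\{Q_x\})$ derived just above in (\ref{re 2}), but with the unit upper bound $\mathbb{I}$ replaced by $(1-\eta)\mathbb{I}$, and then to exploit the positive homogeneity of the feasible set under rescaling. The established duality reads $q_{re}(\{Q_x\})=\max\{\sum_x\tr(Q_xY_x):\sum_xY_x\le\mathbb{I},\ Y_x\ge 0\}$, so it suffices to show that shrinking the constraint operator from $\mathbb{I}$ to $(1-\eta)\mathbb{I}$ scales the optimal value by exactly $1-\eta$. This also identifies the right-hand side with the one-shot game value $P_{succ}^q$, which is the connection needed afterwards for Theorem \ref{t3}.

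First I would fix any $\eta_{*}<1$ and take $\eta\in[\eta_{*},1)$, so that $1-\eta>0$, and substitute $Y_x=(1-\eta)Z_x$ in the maximization on the right. Because $1-\eta>0$, this is a bijection between the feasible set $\{(Y_x)_x:\sum_xY_x\le(1-\eta)\mathbb{I},\ Y_x\ge 0\}$ and the feasible set $\{(Z_x)_x:\sum_xZ_x\le\mathbb{I},\ Z_x\ge 0\}$ of the $q_{re}$-dual: dividing by $1-\eta$ turns $\sum_xY_x\le(1-\eta)\mathbb{I}$ into $\sum_xZ_x\le\mathbb{I}$, and $Y_x\ge 0\Leftrightarrow Z_x\ge 0$. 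The objective transforms linearly, $\sum_x\tr(Y_xQ_x)=(1-\eta)\sum_x\tr(Z_xQ_x)$, so taking the supremum over the two bijective feasible sets yields the right-hand side $=(1-\eta)\,q_{re}(\{Q_x\})$. The degenerate endpoint $\eta=1$ is handled separately: the constraint $\sum_xY_x\le 0$ together with $Y_x\ge 0$ forces every $Y_x=0$, so both sides vanish, giving the identity on all of $[\eta_{*},1]$.

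As an independent check one may instead form the Lagrange dual of the right-hand program directly: introducing $P\ge 0$ for $\sum_xY_x\le(1-\eta)\mathbb{I}$ and $Z_x\ge 0$ for $Y_x\ge 0$, stationarity in $Y_x$ forces $P=Q_x+Z_x\ge Q_x$, and the dual objective collapses to $\min\{(1-\eta)\tr P:P\ge Q_x\}=(1-\eta)q_{re}(\{Q_x\})$ exactly as in the primal (\ref{re 1}); strong duality then follows from Slater's condition, whose strictly feasible point $\sum_xY_x<(1-\eta)\mathbb{I}$ with $Y_x>0$ exists precisely because $1-\eta>0$. Since both routes use nothing beyond $1-\eta>0$, any threshold $\eta_{*}<1$ works and the identity in fact holds for all $\eta\in[0,1]$, so the existence of $\eta_{*}$ is immediate. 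The main obstacle is therefore not the generic case at all but the vanishing of $1-\eta$ at the boundary, which breaks the rescaling and the Slater argument simultaneously and must be dispatched by the direct $Y_x=0$ observation; everything else is bookkeeping layered on top of the duality (\ref{re 2}) already in hand.
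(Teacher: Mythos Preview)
Your proposal is correct and uses essentially the same rescaling argument as the paper: both set $Y_x=(1-\eta)Z_x$ to biject the constrained feasible set onto that of the dual (\ref{re 2}), pulling out the factor $1-\eta$ from the objective. Your treatment is in fact cleaner---you note that any $\eta_{*}<1$ suffices and dispatch the $\eta=1$ endpoint explicitly---whereas the paper introduces a specific $\eta_{*}=1-\|\sum_xY_x\|_{\infty}$ that plays no role in its own argument.
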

\begin{proof}
	Assume $\{Y_x\}$ is the optimal for $q_{re}(\{Q_x\})$ in terms of (\ref{re 2}) with $\sum_x Y_x\le \mathbb{I}.$ Let $\eta_{*}=1-||\sum_xY_x||_{\infty}$.
	
	\begin{align*}
	(1-\eta)q_{re}(\{Q_x\})=\max \sum_k\tr[(1-\eta)Q_kY_k],
	\end{align*}
	Let $Y_x^{*}=(1-\eta)Y_x$, then 
	\begin{align*}
	(1-\eta)q_{re}(\{Q_x\})=&\max\sum_x\tr[Y_x^{*}Q_x]\\
	\textit{s. t.}\hspace{3mm}& \sum_xY^{*}_x\le (1-\eta)\mathbb{I},\\
	&	Y_x^{*}\ge 0.
	\end{align*}
\end{proof}

Then we show the duality of $P_{succ}^{q}(\mathcal{E},\boldsymbol{p},\eta),$

Consider the following Lagrangian 
\begin{align*}
\mathcal{L}(M,\{p_x\},\{\rho_x\},\eta)=& \sum_x p_x\tr(\rho_x Q_x)+\tr[(1-\eta)\mathbb{I}-\sum_xQ_x]M\\
=&  (1-\eta)\tr M+\sum_x \tr(p_xQ_x-M)
\end{align*}

hence, the dual of the above SDP (\ref{ud1}) can be written as
\begin{align*}
P_{succ}^q(\mathcal{E},\boldsymbol{p},\eta)=&\min (1-\eta)\tr M\\
\textit{s. t.}\hspace{3mm}& M\ge p_x\rho_x,
\end{align*}

At last, we show the following theorem.
\begin{theorem}
	Assume $\mathcal{E}=\{\rho_x\}_{x=0}^{n-1}$ is a set of quantum states, then there exists a quantity $\eta_{*}$ such that 
	\begin{align*}
	q_{re}(\{\rho_x\})	=\max_{p>0}\frac{P_{succ}^q(\mathcal{E},\boldsymbol{p},\eta)}{P_{succ}^c(\boldsymbol{p},\eta)},
	\end{align*}
	here the maximum is attained with $\{p_x=\frac{1}{k}\}$.
\end{theorem}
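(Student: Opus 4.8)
The plan is to establish the identity by separately computing both the numerator and denominator of the ratio, then optimizing over the prior distribution $\boldsymbol{p}$. First I would invoke the dual formulation of $P_{succ}^q(\mathcal{E},\boldsymbol{p},\eta)$ derived just above, namely $P_{succ}^q(\mathcal{E},\boldsymbol{p},\eta)=\min\{(1-\eta)\tr M \mid M\ge p_x\rho_x,\ \forall x\}$. The denominator is simply $P_{succ}^c(\boldsymbol{p},\eta)=(1-\eta)\max_x p_x$, so the factor $(1-\eta)$ cancels in the ratio, leaving an expression independent of $\eta$ once $\eta$ lies in the admissible range guaranteed by Lemma \ref{l2}.

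The key technical bridge is Lemma \ref{l2}, which states that for $\eta\in[\eta_*,1]$ one has $(1-\eta)q_{re}(\{Q_x\})=\max\{\sum_x\tr(Y_xQ_x)\mid \sum_x Y_x\le(1-\eta)\mathbb{I},\ Y_x\ge 0\}$. The right-hand side here is exactly the primal form of $P_{succ}^q$ from (\ref{ud1}) after the substitution $Q_x\mapsto p_x\rho_x$: maximizing $\sum_x\tr(Y_x p_x\rho_x)$ subject to $\sum_x Y_x\le(1-\eta)\mathbb{I}$. Thus I would apply Lemma \ref{l2} to the rescaled operator set $\{p_x\rho_x\}_x$ rather than $\{\rho_x\}_x$, obtaining $P_{succ}^q(\mathcal{E},\boldsymbol{p},\eta)=(1-\eta)\,q_{re}(\{p_x\rho_x\}_x)$ for all $\eta$ above the corresponding threshold $\eta_*$. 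This reduces the theorem to showing
\begin{align*}
\max_{\boldsymbol{p}>0}\frac{q_{re}(\{p_x\rho_x\}_x)}{\max_x p_x}=q_{re}(\{\rho_x\}_x),
\end{align*}
with the maximum attained at the uniform distribution.

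For this final optimization I would argue both directions. For the upper bound, I use positive homogeneity and monotonicity of $q_{re}$: since $p_x\rho_x\le(\max_y p_y)\rho_x$, the defining SDP (\ref{re 1}) gives $q_{re}(\{p_x\rho_x\}_x)\le q_{re}(\{(\max_y p_y)\rho_x\}_x)=(\max_y p_y)\,q_{re}(\{\rho_x\}_x)$, where the last step rescales the feasible $P$ by $\max_y p_y$. Dividing by $\max_x p_x$ yields the $\le$ direction for every $\boldsymbol{p}$. For the reverse, I would plug in the uniform distribution $p_x=1/k$, for which $p_x\rho_x=\frac{1}{k}\rho_x$ and $\max_x p_x=1/k$, so the ratio equals $q_{re}(\{\frac{1}{k}\rho_x\}_x)/(1/k)=q_{re}(\{\rho_x\}_x)$ by homogeneity, establishing that the bound is achieved. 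The main obstacle I anticipate is the bookkeeping around $\eta_*$: the threshold from Lemma \ref{l2} depends on the operator set, so I must verify that a single $\eta_*$ can be chosen uniformly (or at least that the $\eta$-dependence genuinely cancels in the ratio for each fixed $\boldsymbol{p}$), and confirm that the $\eta_*$ quoted in the theorem statement is exactly the one supplied by Lemma \ref{l2} applied to $\{p_x\rho_x\}_x$ at the optimizing uniform prior. I would handle this by noting that the cancellation of $(1-\eta)$ is exact whenever both SDPs are evaluated at admissible $\eta$, so the ratio is constant in $\eta$ on the overlap of admissible ranges, and the claimed $\eta_*$ is the supremum of the thresholds arising in the construction.
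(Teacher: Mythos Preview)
Your proposal is correct and follows essentially the same route as the paper. Both arguments cancel the factor $(1-\eta)$, bound the ratio above by $q_{re}(\{\rho_x\})$ via the inequality $p_x\le\max_y p_y$, and then check equality at the uniform prior. The only cosmetic difference is that the paper applies Lemma~\ref{l2} once to $\{\rho_x\}$ and carries out the comparison inside the primal objective $\sum_x(\max_y p_y)\tr(Y_x\rho_x)\ge\sum_x p_x\tr(Y_x\rho_x)$, whereas you first rewrite $P_{succ}^q(\mathcal{E},\boldsymbol{p},\eta)=(1-\eta)q_{re}(\{p_x\rho_x\})$ and then invoke monotonicity and homogeneity of $q_{re}$ as abstract properties; these are the same inequality viewed from the dual and primal sides respectively.

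Your worry about a $\boldsymbol{p}$-dependent threshold $\eta_*$ is overcautious. The dual of $P_{succ}^q$ that you yourself invoke at the outset already gives $P_{succ}^q(\mathcal{E},\boldsymbol{p},\eta)=(1-\eta)\min\{\tr M:M\ge p_x\rho_x\}=(1-\eta)\,q_{re}(\{p_x\rho_x\})$ for \emph{every} $\eta\in[0,1)$, so no threshold is needed for the upper bound, and the single $\eta_*$ produced by Lemma~\ref{l2} for $\{\rho_x\}$ (equivalently, by homogeneity, for $\{\tfrac{1}{k}\rho_x\}$) suffices for the whole statement. This is precisely how the paper handles it.
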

\begin{proof}
	\begin{align*}
	q_{re}(\{\rho_x\})=&\max_{p>0}\frac{\max\limits_{\sum_x  Y_x\le(1-\eta)\mathbb{I},
			Y_x\ge 0\forall x}\sum_x\max_yp_y\tr(Y_xQ_x)}{\max_y p_y(1-\eta)}\\
	\ge&\max_{p>0}\frac{\max\limits_{\sum_xY_x\le(1-\eta)\mathbb{I},
			Y_x\ge 0\forall x}\sum_xp_x\tr(Y_xQ_x)}{\max_y p_y(1-\eta)}
	\end{align*}
	The first equality is due to Lemma \ref{l2}. As the inequality can be saturated by the following probability $\{p_x=\frac{1}{d}\},$ we finish the proof. 
\end{proof}

\end{document}